\documentclass[aps,prl,twocolumn,unsortedaddress]{revtex4-2}

\usepackage{graphicx}
\usepackage{dcolumn}
\usepackage{bm}
\usepackage{lipsum}
\usepackage[T1]{fontenc}
\usepackage{mathpazo}

\usepackage{dsfont}
\usepackage{enumitem}
\usepackage{tcolorbox}
\usepackage{needspace}

\usepackage{amsthm} 
\usepackage{thmtools} 
\usepackage{thm-restate}
\usepackage{amsmath}
\usepackage{amssymb}
\usepackage{amsfonts}
\usepackage{mleftright}
\usepackage{mathtools}
\usepackage{derivative}

\usepackage[normalem]{ulem}
\usepackage[colorlinks,linkcolor=blue,urlcolor=blue, citecolor=blue]{hyperref}
\usepackage{cleveref}
\usepackage[ruled,procnumbered]{algorithm2e}
\usepackage{algpseudocode}

\declaretheorem[name=Theorem,
  refname={theorem,theorems},
  Refname={Theorem,Theorems}]{theorem}
\declaretheorem[numberlike=theorem,name=Lemma,
  refname={lemma,lemmas},
  Refname={Lemma,Lemmas}]{lemma}
\declaretheorem[numberlike=theorem,name=Proposition,
  refname={proposition,propositions},
  Refname={Proposition,Propositions}]{proposition}
\declaretheorem[numberlike=theorem,name=Corollary,
  refname={corollary,corollaries},
  Refname={Corollary,Corollaries}]{corollary}
\declaretheorem[numberlike=theorem,name=Definition,
  refname={definition,definitions},
  Refname={Definition,Definitions}]{definition}
\declaretheorem[numberlike=theorem,name=Assumption,
  refname={assumption,assumptions},
  Refname={Assumption,Assumptions}]{assumption}
\declaretheorem[numberlike=theorem,name=Remark,
  refname={remark,remarks},
  Refname={Remark,Remarks}]{remark}
\crefname{algocf}{protocol}{protocols}
\Crefname{algocf}{Protocol}{Protocols}

\usepackage{xifthen}

\newcommand{\trace}[0]{\mathrm{tr}}
\newcommand{\ket}[1]{\mathinner{|{#1}\rangle}}

\newcommand{\ketbra}[2]{\mathinner{|{#1}\rangle\langle{#2}|}}
\newcommand{\vecket}[1]{\mathinner{|{#1}}\rangle\!\rangle}
\newcommand{\vecbra}[1]{\mathinner{\langle\!\langle{#1}|}}
\newcommand{\vecketbra}[2]{\mathinner{|{#1}\rangle\!\rangle\langle\!\langle{#2}|}}
\newcommand{\vecbraket}[2]{\mathinner{\langle\!\langle{#1}|{#2}\rangle\!\rangle}}

\newcommand{\mathdynact}[0]{\mathfrak{A}}
\newcommand{\mathrestim}[0]{\mathfrak{T}}
\newcommand{\mathmerit}[0]{\mathfrak{S}}
\newcommand{\accuracyN}[0]{\mathfrak{N}}

\newcommand{\Lindblad}[3]{
    \ifthenelse{\isempty{#1}}%
        {\ifthenelse{\isempty{#2}}
            {\ifthenelse{\isempty{#3}}%
                {\mathcal{L}\mleft[\bullet\mright]}%
                {\mathcal{L}(#3)\mleft[\bullet\mright]}%
            }
            {\ifthenelse{\isempty{#3}}%
                {\mathcal{L}\mleft[#2\mright]}%
                {\mathcal{L}(#3)\mleft[#2\mright]}%
            }
        }
        {\ifthenelse{\isempty{#2}}
            {\ifthenelse{\isempty{#3}}%
                {\mathcal{L}_{#1}\mleft[\bullet\mright]}%
                {\mathcal{L}_{#1}(#3)\mleft[\bullet\mright]}%
            }
            {\ifthenelse{\isempty{#3}}%
                {\mathcal{L}_{#1}\mleft[#2\mright]}%
                {\mathcal{L}_{#1}(#3)\mleft[#2\mright]}%
            }
        }
    }

\newcommand{\dissipator}[2]{
    \ifthenelse{\isempty{#2}}%
        {\mathcal{D}\mleft[#1\mright][\bullet]}%
        {\mathcal{D}\mleft[#1\mright]\mleft[#2\mright]}%
    }

\newcommand{\jumpalphabet}[1]{
    \ifthenelse{\isempty{#1}}%
        {\mathcal{J}}%
        {\mathcal{J}_{#1}}%
    }

\newcommand{\memorytorates}[1]{
    \ifthenelse{\isempty{#1}}%
        {\gamma}
        {\gamma^{(#1)}}
    }

\newcommand{\memoryupdate}[3]{
    \ifthenelse{\isempty{#1}}%
        {\ifthenelse{\isempty{#2}}
            {\upsilon_{#3}}
            {\upsilon_{#3}(#1,#2)}
        }
        {\upsilon_{#3}(#1,#2)}
    }

\newcommand{\parameterspace}[1]{
    \ifthenelse{\isempty{#1}}%
        {\mathcal{G}}%
        {\mathcal{G}^{#1}}%
    }

\newcommand{\memoryspace}[0]{\mathcal{M}}

\newcommand{\hilmap}{\mathcal{H}}

\newcommand{\jidx}[0]{j}
\newcommand{\altjidx}[0]{k}
\newcommand{\sidx}[0]{a}
\newcommand{\sidxprime}[0]{b}
\newcommand{\cidx}[0]{x}
\newcommand{\altcidx}[0]{y}
\newcommand{\calphabet}[1]{\mathcal{X}_{#1}}
\newcommand{\ridx}[0]{k}
\newcommand{\Ridx}[0]{K}
\newcommand{\numpps}[0]{G}
\newcommand{\matLindblad}[1]{\widehat{\mathcal{L}}_{#1}}

\newcommand{\weights}[1]{w_{#1}}
\newcommand{\idxset}[0]{\mathcal{I}}

\newcommand{\textCWI}[0]{clockwork independence}

\newcommand{\textST}[0]{self-timing}

\newcommand{\expect}[2]{
    \ifthenelse{\isempty{#1}}
        {\ifthenelse{\isempty{#2}}
            {\mathbb{E}}
            {\mathbb{E}\mleft[#2\mright]}
        }
        {\ifthenelse{\isempty{#2}}
            {\mathbb{E}_{#1}}
            {\mathbb{E}_{#1}\mleft[#2\mright]}
        }
}

\newcommand{\var}[2]{
    \ifthenelse{\isempty{#1}}
        {\ifthenelse{\isempty{#2}}
            {\mathrm{Var}}
            {\mathrm{Var}\mleft[#2\mright]}
        }
        {\ifthenelse{\isempty{#2}}
            {\mathrm{Var}_{#1}}
            {\mathrm{Var}_{#1}\mleft[#2\mright]}
        }
}

\usepackage[page]{appendix}

\renewcommand{\appendixtocname}{List of appendices}

\makeatletter
\let\oldappendix\appendices

\g@addto@macro\tableofcontents{%
  \let\tf@toc@orig\tf@toc
}
\renewcommand{\appendices}{%
  \renewcommand{\thesection}{}
  \let\tf@toc\tf@app
  \addtocontents{app}{\protect\setcounter{tocdepth}{1}}
  \immediate\write\@auxout{%
    \string\let\string\tf@toc\string\tf@app
  }
  \oldappendix
}%

\g@addto@macro\endappendices{%
  \let\tf@toc\tf@toc@orig
  \immediate\write\@auxout{%
    \string\let\string\tf@toc\string\tf@toc@orig
  }%
}  

\renewcommand\tableofcontents{%
    \@starttoc{toc}%
}

\newcommand{\listofappendices}{%
  \begingroup
  \newcommand{\contentsname}{\appendixtocname}
  \let\@oldstarttoc\@starttoc
  \def\@starttoc##1{\@oldstarttoc{app}}
  \tableofcontents
  \endgroup
}

\makeatother

\begin{document}

\preprint{APS/123-QED}

\title{Feedback-Induced Advantage in Quantum Clockworks}

\author{Jakob Miller}
\email[]{jakob.miller@math.ku.dk}
\affiliation{Department of Mathematical Sciences, University of Copenhagen, Universitetsparken 5, 2100 Denmark}
\affiliation{Institute for Theoretical Physics, ETH Z\"urich, Wolfgang-Pauli-Strasse 27, 8093 Z\"urich, Switzerland}
\affiliation{Atominstitut, Technische Universität Wien, 1020 Vienna, Austria}


\author{Paul Erker}
\email[]{paul.erker@tuwien.ac.at}
\affiliation{Atominstitut, Technische Universität Wien, 1020 Vienna, Austria}
\affiliation{IQOQI Vienna, Austrian Academy of Sciences, 1090 Vienna, Austria}

\date{\today}

\begin{abstract}
Atomic frequency standards have achieved steadily increasing precision over the past seventy years, enabled in part by feedback mechanisms that stabilise their output. In parallel,  the timekeeping capabilities of quantum systems have been explored within the recently developed ticking-clock framework, which models clocks as dynamical systems producing a stochastic sequence of ticks. However, a theoretical description that unifies these perspectives and incorporates feedback into autonomous quantum clocks has been lacking.
We introduce a framework for feedback-controlled clockworks in which classical information extracted from the tick sequence is used to influence the subsequent dynamics of the clock. We show that such feedback preserves the core structural features of \textST{} and \textCWI{} that characterise autonomous ticking clocks. We further identify the signal-to-noise ratio $\mathmerit$ as the fundamental figure of merit for assessing the performance of feedback-controlled clocks.
Applying our framework to two representative architectures, we prove that classical clockworks cannot surpass the optimal signal-to-noise ratio achievable without feedback. In contrast, for quantum clockworks we present numerical evidence that feedback can provide a genuine performance enhancement, improving the maximal attainable signal-to-noise ratio. These results establish feedback as a potentially essential ingredient in pushing the fundamental limits of timekeeping in the quantum regime.
\end{abstract}

\maketitle

\paragraph*{Introduction.}

The development of quantum technologies has renewed interest in understanding how timekeeping should be modeled at a fundamental level. In classical settings, constructing reliable clocks is largely an engineering challenge, and a fully self-contained theoretical description of the clock and its control mechanisms is usually unnecessary. In contrast, the timing mechanisms for quantum systems are inextricably coupled to its dynamics and therefore must be modeled explicitly in any self-contained theoretical description\cite{Malabarba_2015,Woods2018,Milburn2020,Woods2021,Xuereb2023,Guzman2024,aqpu,Wadhia2025,Campbell2026}. A complete model of quantum timekeeping must thus include the energetic and dynamical effects of the feedback mechanisms that stabilize and regulate the clock. Despite their central importance, such feedback processes have not yet been incorporated into existing theoretical descriptions of quantum clocks.

Recent work has clarified the fundamental limits on clocks viewed as machines whose task it is to generate a regular sequence of \textit{ticks}. Both information-theoretic and thermodynamic constraints have been established. Clock performance is commonly characterized by the accuracy $\accuracyN$, the number of reliable ticks, and the resolution $\nu$, the average tick rate. For clocks undergoing incoherent dynamics, the second law bounds the accuracy to grow at most linearly with the entropy produced per tick~\cite{Erker2017, Milburn2020}. In contrast, coherent quantum dynamics can yield an accuracy that scales exponentially with entropy production~\cite{Meier2024a}. Moreover, considering the accessible Hilbert-space dimension as a resource reveals that quantum clocks achieve a quadratic accuracy advantage over classical clocks living in a discrete state space of equal dimension~\cite{Woods2022}. An accuracy--resolution tradeoff has also been identified: increasing accuracy reduces resolution~\cite{Meier2023}, with quantum clocks again offering a quadratic improvement $\accuracyN\sim\nu^{-2}$ over the classical limit $\accuracyN\sim\nu^{-1}$. 
Furthermore, for clockworks undergoing incoherent dynamics, the optimal choice of processing the time information obtained from the clockwork has been established by introducing a new thermodynamic uncertainty relation that tightly bounds the signal-to-noise ratio $\mathmerit=\accuracyN\nu$ \cite{Prech2025}.
More and more of these theoretical predictions are being tested experimentally. The link between entropy production and accuracy was first demonstrated in a classical electromechanical resonator~\cite{Pearson2021}. Quantum measurement backaction on a clock has been probed in superconducting circuits~\cite{He2023}. More recently, entropy dissipation associated with tick readout was measured in a semiconductor quantum-dot clock, identifying this as the dominant thermodynamic cost at the quantum scale~\cite{Wadhia2025}.

A major open challenge is understanding timekeeping at the precision frontier. Atomic clocks provide the most accurate time measurements to date~\cite{Aeppli2024} and play an essential role in tests of fundamental physics. However, a full thermodynamic theory of atomic clocks remains absent. In particular, no theoretical framework currently describes the feedback-stabilisation mechanisms, frequency downconversion, and other control processes that are integral to their operation. Given that state-of-the-art atomic clocks surpass classical thermodynamic precision bounds~\cite{Pietzonka2024}, they may already exploit quantum coherence for enhanced energy efficiency~\cite{Meier2024a}, but this remains to be understood within a comprehensive theory. Here we tackle this open question by posing a self-contained framework for understanding feedback mechanisms in ticking (quantum) clocks.\\

\begin{figure}[]
    \centering
    \includegraphics[width=1\linewidth]{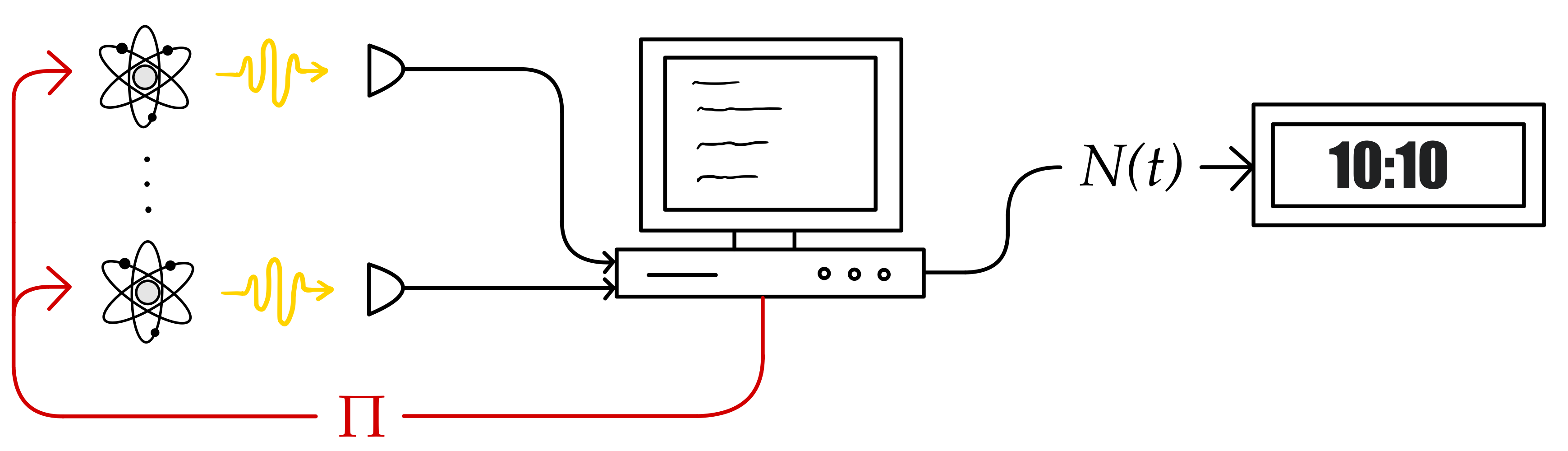}
    \caption{Schematic representation of our feedback framework. Jumps in the (quantum) clockworks are observed by detectors, connected to the classical control unit. Based on the observed pattern of jumps, the control unit applies feedback to the clockworks, according to the feedback policy $\Pi$ (see \cref{def:feedback_policy}), and calculates an estimate of the time $N(t)$, that is displayed in the tick register.}
    \label{fig:feedback_scheme}
\end{figure}

\paragraph*{Ticking Clocks.}
The framework of ticking (quantum) clocks provides an axiomatic approach to classify the physical systems that we consider to be clocks \cite{Woods2021, RNH_ticking_clocks}. Below, we briefly summarize the relevant concepts and terminology following \cite{RNH_ticking_clocks}.Motivated by our everyday interactions with clocks, we require that neither the clock system nor the future time reading of the clock are disturbed by the act of reading off the time.
This \emph{independence principle} clearly separates ticking clocks from general quantum clocks whose state would be altered upon accessing the time information, e.g., by measuring the system 
\footnote{So-called stopwatches would be an example of this type of quantum clock \cite{Woods2022}.}.
One can show that the independence principle implies that every ticking clock is equivalent to a system, called the minimal ticking clock, that can be decomposed into two separate parts: 
The first part, which we will call the clockwork from hereon, is responsible for the time evolution of the system.
The second part, which we will call the (tick) register, stores the clock's estimate of the background time and is accessed when reading off the clock time
\footnote{The equivalence holds up to transformations under a CPTP map, that is reversible and fixed for all times. Due to this equivalence we restrict our attention to minimal ticking clocks and use the terms ticking clock and minimal ticking clock synonymously.}.
In particular, a minimal ticking clock has an associated Hilbert space $\hilmap$, density operator $\rho(t)$ and projectors $\Pi_m$ that decompose as%
\begin{align}
\begin{split}
    \hilmap &= \bigoplus_{n\in\idxset} \hilmap_{C_n},\\
    \rho(t) &= \bigoplus_{n\in\idxset} \rho_{C_n}^{(n)}(t), \label{eq:minimal_ticking_clock} \ \\
    \Pi_m &= \bigoplus_{n\in\idxset} \delta_{m,n} \mathbb{1}_{C_n},
\end{split}
\end{align}
where $\rho^{(n)}_{C_n}(t)$ denote non-normalized density operators and $\mathbb{1}$ denotes the identity.
The operators $\Pi_m$ are the projectors onto the $m$\textsuperscript{th} subspace of the Hilbert space $\hilmap$.
In this decomposition, the clockwork and register can be identified as follows:
The indices $n,m\in\idxset$ take the role of the register.
They encode or register all information about the clock's estimate of the current background time and are, hence, called register indices.
The probability of reading off the value $m$
of the clock's display at time $t$ is given by $\trace[\Pi_m\rho(t)]=\trace[\rho^{(m)}_{C_m}(t)]$.
Therefore, performing a measurement associated to the POVM $\{\Pi_m\}_{m\in\idxset}$ can be interpreted as accessing the clock's current time information.
The Hilbert space $\hilmap_{C_n}$, on the other hand, is the space in which the clockwork lives. 
Note, that in general this can be a different (physical) clockwork for each $n$.\\

Silva et al. \cite{RNH_ticking_clocks} propose four different properties that impose further restrictions on the set of ticking clocks. 
For us, the relevant properties are \emph{\textST{}} and \emph{\textCWI{}}.
We say that the clock satisfies \textCWI{}, if the mathematical structure and dynamics of the clockwork are independent of the register index $n\in\idxset$, that encodes the clock's estimate of the time.
Intuitively, this captures that the clock uses the same clockwork to produce all of its ticks.
Under this assumption, \cref{eq:minimal_ticking_clock} simplifies to
\begin{align}
\begin{split}
    \hilmap &= \hilmap_C \otimes\hilmap_{T}\\
    \rho(t) &= \sum_{n\in\idxset} \rho_{C}^{(n)}(t) \otimes \ketbra{n}{n}_T\\
    \Pi_m &= \mathbb{1}_{C}\otimes\ketbra{m}{m}_T,
\end{split}
\end{align}
where the Hilbert space $\hilmap_T$ is called the tick register.
We say that a ticking clock fulfills \textST{} if the time evolution of the clock is independent of its environment and Markovian, i.e., if the evolution of the clock only depends on the initial state of the clock $\rho(0)$ at time $t=0$ and if the evolution is homogeneous in time.
This ensures that the clock does not receive any time information from its environment and therefore does not rely on some form of external timing resources.
For a ticking clock satisfying both \textST{} and \textCWI{}, the time evolution is governed by a quantum Markovian master equation
\begin{equation}
    \odv{}{t}\rho(t) = \Lindblad{}{\rho(t)}{}
    ,
\end{equation}
with a Lindblad operator
\begin{equation}
    \Lindblad{}{}{} = -i[H,\bullet] + \sum_{\jidx}\dissipator{J_{\jidx}}{}.
\end{equation}
We defined dissipators $\dissipator{J}{} = J\bullet J^\dagger - \frac{1}{2} \{J^\dagger J,\bullet\}$, the Hamiltonian $H = H_C\otimes\mathbb{1}_T$ and the jump operators $J_{\jidx,n} = J_{C,\jidx}\otimes\ketbra{n+\weights{\jidx}}{n}_T$,
where the index $\jidx\in\jumpalphabet{}$ denotes the type of jump that occurred and $\jumpalphabet{}$ denotes the set of all possible types of jumps.
The weight $\weights{\jidx}$ determines how the register index $n$ changes, given that a jump of type ${\jidx}$ occurred. \\

Observe, that the operators $H_C$ and $J_{C,\jidx}$ are responsible for the Hermitian and non-Hermitian time evolution of the clockwork, respectively, whereas the tick register $T$ does not participate in the time evolution \cite{RNH_ticking_clocks}. Its sole function is to store a weighted count $N(t)=\sum_{\jidx}\weights{\jidx}N_{\jidx}(t)$ of the number of jumps that have occurred until time $t$, where $N_{\jidx}(t)$ denotes the number of jumps of type $\jidx$ that have occurred until time $t$. 
In particular, for each jump of type $\jidx$ the count $N(t)$ increases by an increment $\weights{\jidx}\in\mathbb{R}$.
More formally, the tick register stores the value that an \emph{integrated current} with weights $\weights{\jidx}$ takes on.
For a more detailed description of integrated currents and their properties in the asymptotic infinite time limit, that can be obtained using the tools of full counting statistics, we refer to \cref{app:FCS_for_clocks} or \cite{bridging_the_gap}.
Due to this fact, one may neglect the tick register and describe a ticking clock satisfying \textST{} and \textCWI{} by a reduced Lindblad operator, specified by the Hamiltonian $H_C$ and the jump operators $J_{C,\jidx}$, together with an integrated current, that is specified by the weights $\weights{\jidx}$.
In this formalism, linear post-processing of the time signal is included into the choice of weights $\{\weights{\jidx}\}_{\jidx}$. Choosing a new set of weights (i.e. a new integrated current) corresponds to a different way of post-processing the information in $\{N_{\jidx}(t)\}_{\jidx}$. \\

\paragraph{Clock precision.}
We use the signal-to-noise ratio (SNR), defined as 
\begin{equation}
    \label{eq:SNR}
        \mathmerit(t) \coloneq \frac{\mleft(\odv{}{t}\expect{t}{N}\mright)^2}{\var{t}{N}/t},
\end{equation}
to measure the quality of the time signal produced by a ticking clock.
Here, $\expect{t}{N}$ and $\var{t}{N}$ denote the expectation value and variance of the random variable $N(t)$ at background time $t$.
The SNR clearly depends on the choice of weights $\{\weights{\jidx}\}_\jidx$ specifying the integrated current $N(t)$ and as a measure of the precision of a quantum clock has recently been studied in \cite{Prech2025}.
For a discussion of its relation to the accuracy $\accuracyN$ we refer to \cref{app:clock_quality}.
Note, that throughout this work we restrict our attention to the asymptotic longtime limit of the SNR $\mathmerit\coloneq\lim_{t\rightarrow\infty}\mathmerit(t)$. 
Hence, we will always assume the clockwork to be in a steady state, i.e., a state that lies in the kernel of the reduced Lindblad operator.\\

\paragraph*{Feedback Framework.}
Before we introduce our model of feedback, consider two ticking clocks satisfying \textST{} and \textCWI{} that consist of clockworks $C_{\sidx}$ and tick registers $T_{\sidx}$ for $\sidx=1,2$. When having access to both clocks, one could read off both time estimates $N^{(\sidx)}(t)$ stored in the tick registers $T_{\sidx}$ and apply linear post-processing $N(t)=\sum_{\sidx=1,2} \alpha_\sidx N^{(\sidx)}(t)$, in order to obtain an estimate of time that is better than that given by each $N^{(\sidx)}(t)$, individually.
Observe, that this $N(t)$ corresponds to an integrated current of the joint system $\hilmap_{C}=\hilmap_{C_1}\otimes\hilmap_{C_2}$, whose time evolution is governed by the joint Lindblad operator $\mathcal{L}_{C_1}\otimes\mathcal{I}_{C_2}+\mathcal{I}_{C_1}\otimes\mathcal{L}_{C_2}$, where $\mathcal{I}$ is the identity channel. Hence, we can view two independent ticking clocks together with post-processing as a single, large ticking clock, with clockwork space $\hilmap_C$ and tick register $T$, storing the value that $N(t)$ takes on
\footnote{Of course, one can always choose to ignore the information obtained from clockwork $C_1$ by setting $\alpha_{0}=1$ and $\alpha_{1}=0$ to recover $N(t)=N^{(0)}(t)$ and vice versa.}. \\

Inspired by Markov decision processes \cite{puterman2014markov}, the core idea behind our feedback model is to expand the bipartite structure, comprised of clockwork and tick register (possibly representing multiple ticking clocks plus post-processing), to a tripartite structure which additionally includes a classical control unit, living in Hilbert space $\hilmap_{M}.$ As the name suggests the control unit exercises control over the clockworks, with its decisions based solely on the pattern of jumps that have occurred in the joint clockwork $C$.
Since the control unit remains classical at every point in time we refer to our model as \emph{incoherent feedback}. A schematic illustration of this model can be found in \cref{fig:feedback_scheme}. 

In order for the overall system consisting of the joint clockwork $C$, tick register $T$ and control unit $M$ to satisfy the intuition behind \textST{} and \textCWI{}, we make the following four assumptions:
Firstly, we assume that the control unit exercises its control  over clockwork $C_{\sidx}$ by choosing between different sets of parameters $\mathbf{c}\in\parameterspace{\sidx}$ specifying the Lindblad operator 
\begin{equation}
        \Lindblad{C_{\sidx}}{}{\mathbf{c}} = -i[H_{C_{\sidx}}(\mathbf{c}),\bullet] + \sum_{\jidx}\dissipator{J_{C_{\sidx},\jidx}(\mathbf{c})}{}
    \label{eq:Lindblad_with_control}    
\end{equation}
and hence future time evolution of clockwork $C_\sidx$. 
Here $\parameterspace{\sidx}$ denotes the set of possible combinations of control parameters. The kind of control over the dynamics of system $C_{\sidx}$ is determined by the dependence of the Hamiltonian $H_{C_{\sidx}}(\mathbf{c})$ and jump operators $J_{C_{\sidx},\jidx}(\mathbf{c})$ on $\mathbf{c}$.
Secondly, we assume that a change of the parameters $\mathbf{c}$ is implemented instantaneously and without altering the clockworks' state in the process.
Thirdly, in order to ensure that the control unit does not rely on any additional timing resources, we assume that the parameters $\mathbf{c}$ can only be set immediately after a jump in one of the clockworks $C_\sidx$ has occurred and that the choice of parameters is fully determined by the state of the control unit's memory $m$, which encodes information about previous decisions and the sequence of jumps that have so far occurred in the joint clockwork $C$. 
Lastly, we a assume the memory to be of finite size.
A motivation and detailed description of these assumptions, which lead us to the definition of what we call a \emph{feedback policy}, can be found in \cref{app:assumptions_feedback_policy}.

\begin{definition}[Feedback policy]\label{def:feedback_policy}
    For systems $C_{\sidx}$ with a Lindblad operator as in~\cref{eq:Lindblad_with_control} for $\sidx=1,\ldots,{\numpps}$, a feedback policy $\Pi$ is a tuple $(\memoryspace, \memoryupdate{}{}{}, \memorytorates{})$ consisting of
    \begin{enumerate}[label=(\roman*)]
        
        \item a finite set $\memoryspace$, called the memory state space,
        
        \item a function $\memoryupdate{}{}{}: \mathcal{M}\times\mleft(\bigcup_{\sidx=1}^{\numpps}\jumpalphabet{\sidx}\mright)\rightarrow  \mathcal{M}$, called the memory-update function, and
        
        \item a set of functions $\memorytorates{} = \{\memorytorates{\sidx}\}_{\sidx=1}^{\numpps}$, where $\memorytorates{\sidx}: \memoryspace \rightarrow \parameterspace{\sidx}$, is called the parameter-update function for $C_{\sidx}$.
    \end{enumerate}
\end{definition}

Since the control unit's memory remains classical at all times, we may define a Hilbert space $\hilmap_M$ of dimension $\mathrm{dim}(\hilmap_M)=|\memoryspace|$ and an orthonormal basis $\{\ket{m}_M\}_{m\in\memoryspace}$ such that the control unit is described by a diagonal density operator in $\hilmap_M$. Then, an update of the memory $m\rightarrow\memoryupdate{m}{\jidx}{}$ upon observing a jump of type $\jidx$ can be viewed as a jump operator $\ketbra{\memoryupdate{m}{\jidx}{}}{m}_M$. The Lindblad operator $\Lindblad{\mathrm{fb}}{}{}$ governing the dynamics of system $\hilmap_{C_1}\otimes\cdots\otimes\hilmap_{C_{\numpps}}\otimes\hilmap_M$
is given by
\begin{align}
\begin{split}
    \label{eq:Lindblad_evo_conditioned_on_memory}
    \Lindblad{\mathrm{fb}}{}{} = \sum_{m\in\memoryspace} \sum_{\sidx=1}^{\numpps}\biggl( &-i[H_{C_a M}(m)\otimes\mathbb{1}_{C_{\bar{\sidx}}},\bullet] + \\
    &+ \sum_{\jidx\in\jumpalphabet{\sidx}} \dissipator{J_{C_a M,\jidx}(m)\otimes\mathbb{1}_{C_{\bar{\sidx}}}}{} \biggr),
\end{split}
\end{align}
where $\mathbb{1}_{C_{\bar{\sidx}}}=\mathbb{1}_{C_1}\otimes\cdots\otimes\mathbb{1}_{C_{\sidx-1}}\otimes\mathbb{1}_{C_{\sidx+1}}\otimes\cdots\otimes\mathbb{1}_{C_{\numpps}}$ and 
\begin{equation}
    \label{eq:Hamiltonian_conditioned_on_memory}
    H_{C_a M}(m) = H_{C_a}(\memorytorates{\sidx}(m))\otimes\ketbra{m}{m}_M
\end{equation} 
ensures that the unitary time evolution of system $C_{\sidx}$ is determined by $H_{C_a}(\memorytorates{\sidx}(m))$, given that the control unit's memory is in state $\ket{m}_M$, and where
\begin{equation}
    \label{eq:Jump_op_conditioned_on_memory}
    J_{C_a M,\jidx}(m) = J_{C_a,\jidx}(\memorytorates{\sidx}(m))\otimes\ketbra{\memoryupdate{m}{\jidx}{}}{m}_M
\end{equation}
describes a jump of type $\jidx$ (in system $C_{\sidx}$), given that the control unit's memory was in state $\ket{m}_M$ before the jump, followed by an immediate update of the control unit's memory to $\ket{\memoryupdate{m}{\jidx}{}}_M$.
Note, that the terms $(J_{C_a M,\jidx}(m))^\dagger J_{C_a M,\jidx}(m) = (J_{C_a,\jidx}(\memorytorates{\sidx}(m)))^\dagger J_{C_a,\jidx}(\memorytorates{\sidx}(m)) \otimes \ketbra{m}{m}_M$ that appear in $\dissipator{J_{C_a M,\jidx}(m)\otimes\mathbb{1}_{C_{\bar{\sidx}}}}{}$ have the same form as~\cref{eq:Hamiltonian_conditioned_on_memory}.
Thus, the evolution under $\Lindblad{\mathrm{fb}}{}{}$ indeed preserves the control unit's classical structure.
Treating $\hilmap_{C_1}\otimes\cdots\hilmap_{C_G}\otimes\hilmap_M$ as one large clockwork, it satisfies \textST{} and \textCWI{}, whereas a single subsystem $\hilmap_{C_{\sidx}}$ in general fulfills neither of the two properties.
In \cref{app:previous_work}, we show how our framework generalizes previous models of feedback and external control in (quantum) clocks \cite{yang2019accuracyenhancingprotocolsquantum, Barato_Seifert}.

\begin{figure}
    \centering
    \includegraphics[width=1.15\linewidth]{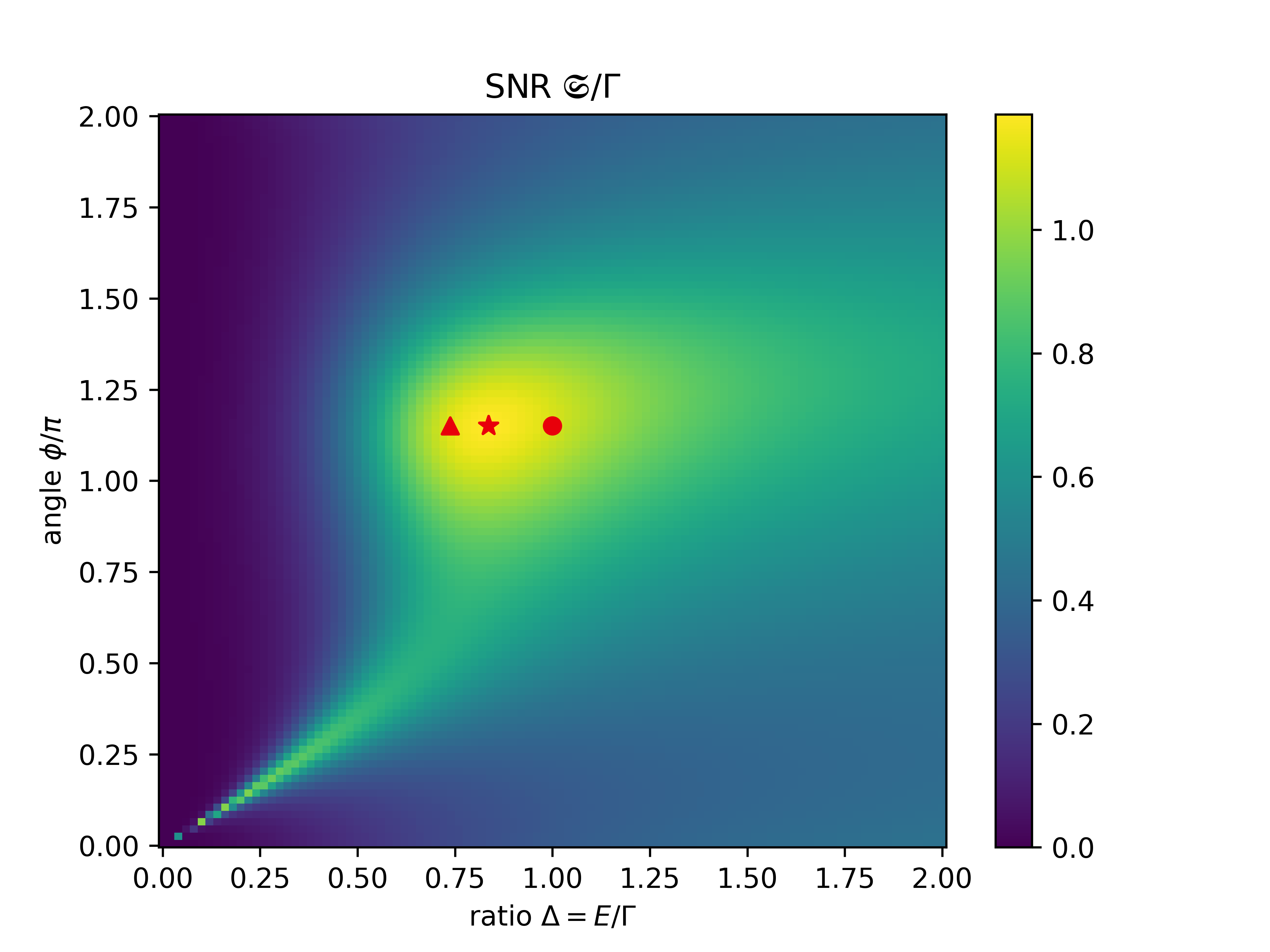}
    \caption{Data from a numerical simulation in Python \cite{Python,Github} of the signal-to-noise ratio of the qubit clockwork and an integrated current, counting each jump in the clockwork, plotted versus the control parameters $E^{(\sidx)}$ and $\phi^{(\sidx)}$.
    The red star marks the parameters that achieve the maximum SNR $\mathmerit^*/\Gamma\approx1.19$ for a single qubit clockwork.
    The red triangle and circle mark the parameters between which the optimal feedback policy for two copies of a qubit clockwork, constructed in \cref{app:go_example_qubit_clockwork}, switches.
    }
    \label{fig:SNR_qubit_clockwork_Python}
\end{figure}

\paragraph{Reference point.} In order to determine the merit of incoherent feedback protocols we compare them against protocols with a \emph{constant feedback policy}.
We call a feedback policy $\Pi$ constant, if it has a trivial memory space $\memoryspace=\{0\}$. 
Then, both its memory-update function $\memoryupdate{}{}{}$ and the parameter-update functions $\memorytorates{\sidx}$ must be constant. Consequently, the control unit chooses one particular set of control parameters at which the clockworks $C_{\sidx}$ are operated at all times $t$, i.e., it implements no feedback.
We deem a general feedback policy useful only if it outperforms any constant feedback policy that has the same level of control as the general one, i.e., that has the same set of allowed parameters $\{\parameterspace{\sidx}\}_\sidx$.

For this comparison to be sensible, it should include the possibility to process the information obtained from the clockworks differently, when implementing a constant versus when implementing a general feedback policy.
How the information is processed is determined by the choice of integrated current, i.e., by the choice of weights $\{\weights{\jidx}\}_\jidx$.
Hence, we compare the SNR of the general feedback policy for some integrated current to the highest signal-to-noise ratio that any integrated current can achieve for a given constant feedback policy, i.e., $\max_{\{\weights{\jidx}\}_\jidx}\mathmerit$. 
In general, such an optimal integrated current for the constant policy differs form an optimal one for the general policy.

\paragraph*{Results.}
In the following, we compare the performance of constant and general feedback policies for classical and quantum clockworks of dimension two. Let us fix an orthonormal basis $\{\ket{0},\ket{1}\}.$
Firstly, we focus on the classical case: 
As discussed in \cref{app:classical_Markovian_sys}, any two-dimensional clockwork $C_\sidx$ satisfying the properties of \textST{} and \textCWI{} that remains classical, i.e., diagonal in the orthonormal basis above, throughout the evolution of the clockwork can be specified by jump operators $J_{C_\sidx,0} = \sqrt{\Gamma_{0}^{(a)}}\ketbra{1}{0}_{C_\sidx}$, $J_{C_\sidx,1} = \sqrt{\Gamma_{1}^{(a)}}\ketbra{0}{1}_{C_\sidx}$ and a trivial Hamiltonian $H_{C_\sidx}=0$. Thus, the only tuneable parameters of a classical clockwork of dimension two are the two jump rates $\Gamma_{0}^{(a)}$ and $\Gamma_{1}^{(a)}$, i.e., the control parameters are $\mathbf{c}=(\Gamma_{0}^{(a)},\Gamma_{1}^{(a)})^{\mathrm{T}}$.
We limit our discussion to feedback policies for which both rates can be tuned within the same nontrivial range $\{0\}\neq\tilde{\parameterspace{a}}\subset\mathbb{R}_{\geq0}$, i.e., a parameter space $\parameterspace{a}=\tilde{\parameterspace{a}}\times\tilde{\parameterspace{a}}$.
For $\numpps$ classical clockworks and feedback policies of this form, we prove that incoherent feedback can never be beneficial in terms of the signal-to-noise ratio.
\begin{restatable}[]{theorem}{classicalSNRbound}
    \label{thm:classical_SNR_bound}
    Consider $\numpps$ classical clockworks of dimension two together with a feedback policy $\Pi$ such that for all $\sidx=1,\ldots,\numpps$ we have $\parameterspace{a}=\tilde{\parameterspace{a}}\times\tilde{\parameterspace{a}}$, for some $\{0\}\neq\tilde{\parameterspace{a}}\subset\mathbb{R}_{\geq0}$, and let the joint system be initialized in a steady state.
    Then, for any integrated current $N(t)$ the signal-to-noise ratio $\mathmerit$ is upper bounded by 
    \begin{equation}
        \mathmerit \leq \max_{m\in\memoryspace}\mleft( \max_{(i_1,\ldots,i_\numpps)\in\{0,1\}^{\numpps}}\sum_{\sidx=1}^{\numpps}\memorytorates{\sidx}_{i_a}(m)\mright).
        \label{eq:maximal_SNR}
    \end{equation}
\end{restatable}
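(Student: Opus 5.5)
The plan is to prove \cref{thm:classical_SNR_bound} by reducing it to a kinetic uncertainty relation (KUR) for classical Markov jump processes. The right-hand side of \cref{eq:maximal_SNR} will then appear as an upper bound on the steady-state dynamical activity.

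\emph{Step 1: the joint system is a classical Markov jump process.} Each clockwork $C_\sidx$ is classical, with jump operators $J_{C_\sidx,0}(m)=\sqrt{\memorytorates{\sidx}_0(m)}\ketbra{1}{0}$, $J_{C_\sidx,1}(m)=\sqrt{\memorytorates{\sidx}_1(m)}\ketbra{0}{1}$ and $H_{C_\sidx}=0$; here I read $\memorytorates{\sidx}(m)=(\memorytorates{\sidx}_0(m),\memorytorates{\sidx}_1(m))\in\tilde{\parameterspace{\sidx}}\times\tilde{\parameterspace{\sidx}}$. Consequently $\Lindblad{\mathrm{fb}}{}{}$ in \cref{eq:Lindblad_evo_conditioned_on_memory} maps diagonal states in the basis $\ket{i_1,\ldots,i_\numpps}\otimes\ket{m}_M$ to diagonal states. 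I will therefore treat the dynamics as a classical jump process on the finite state space $\{0,1\}^\numpps\times\memoryspace$. Its transitions are labelled by the jump types $\jidx\in\jumpalphabet{\sidx}$. Each transition flips $i_\sidx$, updates $m\mapsto\memoryupdate{m}{\jidx}{}$, and occurs with rate $\memorytorates{\sidx}_{i_\sidx}(m)$. The total escape rate from the state $(i_1,\ldots,i_\numpps,m)$ is therefore $\sum_{\sidx}\memorytorates{\sidx}_{i_\sidx}(m)$. Any integrated current $N(t)$ is a weighted count of these labelled transitions, and the labelled transitions may form a multigraph.

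\emph{Step 2: a KUR via a time-rescaling Cramér--Rao argument.} I will prove $\var{t}{N}\geq t^2(\odv{}{t}\expect{t}{N})^2/(tA)$ for a stationary initial state $\pi$, where $A=\sum_{s}\pi(s)\,\mathrm{esc}(s)$ is the activity. To do this, multiply all rates by $(1+\theta)$. This leaves $\pi$ stationary and rescales time, so $\expect{t,\theta}{N}=(1+\theta)\,\expect{t}{N}$ and $\partial_\theta\expect{t,\theta}{N}|_{\theta=0}=t\,\odv{}{t}\expect{t}{N}$. For the trajectory distribution, the path log-likelihood is $\sum_{\text{jumps}}\log(1+\theta)-(1+\theta)\int_0^t\mathrm{esc}(s(\tau))d\tau$ plus $\theta$-independent terms. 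From this, the Fisher information at $\theta=0$ equals the expected number of jumps, namely $tA$; this uses the martingale property of (jump count minus integrated escape rate). The Cramér--Rao bound applied to $N$ as an estimator then gives the claimed inequality. Dividing through yields $\mathmerit(t)\leq A$ for every $t$, and hence $\mathmerit\leq A$ in the limit. This argument needs no ergodicity or uniqueness of the steady state, because any stationary initial state works.

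\emph{Step 3: bound the activity.} Since $A$ is a $\pi$-average of the escape rates, it is at most $\max_{m}\max_{(i_1,\ldots,i_\numpps)}\sum_\sidx \memorytorates{\sidx}_{i_\sidx}(m)$. This is exactly \cref{eq:maximal_SNR}. The hypothesis $\parameterspace{\sidx}=\tilde{\parameterspace{\sidx}}\times\tilde{\parameterspace{\sidx}}$ is not needed for the bound itself. I expect it to enter only when showing that the bound is attained by a constant policy, which is not part of this statement.

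The main obstacle I anticipate is the rigour of Step 2. One issue is justifying the Fisher-information computation and the differentiation under the path measure when the weights $\weights{\jidx}$ are arbitrary real numbers. Another is handling states with zero escape rate, since rates in $\tilde{\parameterspace{\sidx}}$ may vanish. A third is reconciling the derivation with the full-counting-statistics definition of the asymptotic SNR from \cref{app:FCS_for_clocks}. For the degenerate-rate case, I would first use the finite-$t$ inequality and then pass to the limit.
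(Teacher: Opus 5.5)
Your proposal is correct and matches the paper's proof. The paper also writes the joint clockwork--memory dynamics as a classical rate matrix with escape rates $\Gamma_{(x,m)}=\sum_{a}\gamma^{(a)}_{x^{(a)}}(m)$, applies the kinetic uncertainty relation $\mathfrak{S}(t)\le\mathfrak{A}$, and bounds the activity by the largest escape rate. The only difference is that you re-derive the KUR through the time-rescaling Cram\'er--Rao argument instead of citing Terlizzi and Baiesi; your observation that the symmetric parameter space is not needed for this bound is consistent with the paper, which uses that hypothesis only for achievability in the corollary.
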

Note, \cref{thm:classical_SNR_bound} holds for arbitrary dimensions, the only restriction being the symmetric parameter space.
The proof relies on the kinetic uncertainty relation in \cite{Terlizzi_Baiesi_KUR} and is given in \cref{app:proof_no_go} together with a proof of the following result.
\begin{restatable}[]{corollary}{nogo}
    \label{cor:No_go}
    Consider $\numpps$ classical clockworks of dimension two together with a feedback policy $\Pi$ as in \cref{thm:classical_SNR_bound}. Then, there exists a constant feedback policy $\Pi'$ and an integrated current $N'(t)$ that satisfy \cref{eq:maximal_SNR} with equality.
\end{restatable}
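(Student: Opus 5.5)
We need to build a constant policy $\Pi'$ and an integrated current $N'(t)$ whose SNR equals the right-hand side of \cref{eq:maximal_SNR}. Let $m^*\in\memoryspace$ and $(i_1^*,\ldots,i_\numpps^*)\in\{0,1\}^\numpps$ be maximisers, and write $\Gamma^*_\sidx\coloneq\memorytorates{\sidx}_{i^*_\sidx}(m^*)\in\tilde{\parameterspace{\sidx}}$. Because $\parameterspace{\sidx}=\tilde{\parameterspace{\sidx}}\times\tilde{\parameterspace{\sidx}}$, the symmetric rate pair $(\Gamma^*_\sidx,\Gamma^*_\sidx)$ is an admissible parameter for clockwork $C_\sidx$. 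We take $\Pi'$ to have trivial memory $\memoryspace'=\{0\}$ and $\memorytorates{\sidx}(0)=(\Gamma^*_\sidx,\Gamma^*_\sidx)$ for every $\sidx$. So the $\sidx$-th clockwork is a two-state Markov chain that flips in both directions at rate $\Gamma^*_\sidx$, and the $\numpps$ clockworks evolve independently.

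For the current we count every jump with unit weight: $\weights{\jidx}=1$ for all jump types in all clockworks. Then $N'(t)=\sum_\sidx N^{(\sidx)}(t)$. For a symmetric two-state chain the total escape rate is $\Gamma^*_\sidx$ whatever the state, so $N^{(\sidx)}(t)$ is exactly a Poisson process of rate $\Gamma^*_\sidx$. This holds for any initial state, in particular the steady state, which is the uniform distribution.

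The processes are independent, so $N'(t)$ is Poisson with rate $\Lambda=\sum_\sidx\Gamma^*_\sidx$. Hence $\expect{t}{N'}=\Lambda t$ and $\var{t}{N'}=\Lambda t$. Substituting into \cref{eq:SNR} gives $\mathmerit(t)=\Lambda^2/\Lambda=\Lambda$ for every $t$, and so also in the limit $t\to\infty$. This is the right-hand side of \cref{eq:maximal_SNR}.

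Since $\Pi'$ is itself a feedback policy of the type covered by \cref{thm:classical_SNR_bound}, the theorem caps its SNR at the value of \cref{eq:maximal_SNR} evaluated for $\Pi'$. That value is $\max_{i}\sum_\sidx\Gamma^*_\sidx=\Lambda$, so the pair $(\Pi',N')$ attains the bound for $\Pi'$ with equality. It also attains the value of the bound for the original $\Pi$, which was the goal.

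The only real content of the argument lies in two places. The first is noticing that the symmetric parameter space makes the choice of equal rates admissible. The second is that equal rates make the jump process exactly Poissonian, which saturates the kinetic uncertainty relation. I do not expect a serious obstacle. The one point to check is a degenerate maximiser with $\Gamma^*_\sidx=0$: such a clockwork contributes nothing and can be dropped. Some $\Gamma^*$ is positive because $\tilde{\parameterspace{\sidx}}\neq\{0\}$, and since the maximum already dominates any positive rate in $\tilde{\parameterspace{\sidx}}$, we have $\Lambda>0$ and the SNR is well defined.
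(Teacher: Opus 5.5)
Your proof is correct. It uses the same construction as the paper: the constant policy $\Pi'$ with symmetric rate pairs $(\Gamma^*_\sidx,\Gamma^*_\sidx)$ taken from a maximiser $(m^*,i^*)$, and the current that counts every jump with unit weight. Where you differ is in how you certify saturation.

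The paper argues through uncertainty relations:
\begin{itemize}
\item under $\Pi'$ every joint state has escape rate $\Lambda$, so $\mathdynact=\mathrestim^{-1}=\Lambda$;
\item the stationary distribution is unique, because every state is reachable in at most $\numpps$ jumps;
\item $N'(t)$ is, up to a factor, the hyperaccurate current $N_{BLUE}(t)$, so the clock uncertainty relation \cref{eq:CUR} holds with equality.
\end{itemize}
You instead compute the statistics exactly. State-independent escape rates make the jump count a Poisson process, so its mean and variance are both $\Lambda t$.

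Your route has three advantages. It is more elementary and self-contained, with no appeal to the CUR or to the hyperaccuracy of $N_{BLUE}$. It gives $\mathmerit(t)=\Lambda$ at every finite $t$, not only asymptotically; the CUR itself can fail at finite times. It also needs no uniqueness of the steady state, whereas the paper's reachability argument tacitly requires every $\Gamma^*_\sidx>0$, a case you handle explicitly.

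The paper's route buys interpretation. It exhibits $N'$ as the hyperaccurate current of $\Pi'$. It also shows that saturation comes from uniform escape rates closing the gap between the KUR and the CUR ($\mathdynact=\mathrestim^{-1}$), which is the mechanism its closing remark hopes to exploit for asymmetric parameter spaces. You recover the optimality of $N'$ for $\Pi'$ as well, by applying \cref{thm:classical_SNR_bound} to $\Pi'$ itself.

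One correction to your last paragraph: $\tilde{\parameterspace{\sidx}}\neq\{0\}$ does not imply $\Lambda>0$. The maximum in \cref{eq:maximal_SNR} ranges only over the values the given $\Pi$ actually uses, not over all of $\tilde{\parameterspace{\sidx}}$. If $0\in\tilde{\parameterspace{\sidx}}$, the policy with $\memorytorates{\sidx}(m)=(0,0)$ for all $m,\sidx$ is admissible. Then no jumps ever occur, every current has an SNR of the form $0/0$, and the statement is vacuous. You should simply exclude this degenerate case, as the paper implicitly does, rather than argue it away.
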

Therefore, in the case of classical clockworks of dimension two, for which both jump rates $\Gamma_{0}^{(a)}$ and $\Gamma_{1}^{(a)}$ can be tuned in the same range, feedback cannot be beneficial, as for every feedback policy and integrated current one can construct a constant feedback policy and a potentially different integrated current, performing at least equally well or better in terms of the SNR. 
Below we demonstrate that this is surprisingly not the case for quantum clockworks of dimension two, by constructing an explicit example of a clockwork and a feedback policy for $\numpps=2$, that outperforms any constant policy.

Consider a Hamiltonian $H_{C_\sidx} = - \frac{E^{(a)}}{2}\mleft(\ketbra{0}{0} - \ketbra{1}{1}\mright)$, where $E^{(a)}$ denotes the energy difference between the ground state $\ket{0}$ and the excited state $\ket{1}$, as well as a jump operator 
\begin{equation}
    J_{C_a} = \sqrt{\Gamma} \ketbra{\phi^{(a)}}{+}
    = \frac{\sqrt{\Gamma}}{2}
    \begin{pmatrix}
        1 & 1\\
        \mathrm{e}^{i\phi^{(a)}} & \mathrm{e}^{i\phi^{(a)}}
    \end{pmatrix},
\end{equation}
where we defined the states $\ket{+} = \frac{1}{\sqrt{2}}(\ket{0}+\ket{1})$ and $\ket{\phi^{(a)}} = \frac{1}{\sqrt{2}}(\ket{0}+ \mathrm{e}^{i\phi^{(a)}} \ket{1})$. 
When visualizing the state of the clockwork on a Bloch sphere, the ratio between the energy $E^{(a)}$ and the rate $\Gamma^{(a)}$ determines, whether the continuous rotation of the state vector around the equator, induced by the Hamiltonian, or the non-continuous evolution via jumps into the state $\ket{\phi^{(a)}}$ is dominant. Note, however, that due to the backaction there also exists a non-Hermitian part in the continuous evolution in between jumps.
In the following, we will keep the rate $\Gamma$ a fixed parameter and treat both the energy $E^{(\sidx)}$ and the angle $\phi^{(\sidx)}$ as control parameters of the clockwork, i.e., $\mathbf{c}=(E^{(\sidx)}, \phi^{(\sidx)} )^{\mathrm{T}}$.
As we allow for arbitrary values of the control parameters the parameter space is given by $\parameterspace{(\sidx)}=(0,\infty)\times[0,2\pi)$.

Note, that for a single qubit clockwork only a single jump operator and therefore only a single choice of an integrated current $N^{(\sidx)}(t)$ exists - up to an irrelevant multiplicative constant:
the integrated current counting the total number of jumps that have occurred in clockwork $C_\sidx$ up to time $t$.
In \cref{fig:SNR_qubit_clockwork_Python}, we plot the SNR of a qubit clockwork for this integrated current versus the angle $\phi^{(a)}$ in units of $\pi$ and the ratio $\Delta\coloneq E^{(a)}/\Gamma$. 
As indicated by the red star, the maximal SNR ${\mathmerit^*}/{\Gamma} \approx 1.19$ is reached for $\Delta^*\approx0.84$ and $\phi^*/\pi\approx1.15$.
An analytic expression of the SNR and further discussion can be found in \cref{app:SNR_single_qubit_clockwork}.

Let us consider two copies of this clockwork, $C_1$ and $C_2$, together with a constant feedback policy $\Pi$. In \cref{app:proof_constant_feedback_policies_result}, we show that for a fixed constant feedback policy the highest signal-to-noise ratio, achievable by any integrated current, is equal to a sum running over all clockworks $C_\sidx$: Each term in this sum is the highest SNR achievable by any integrated current, when we treat each clockwork $C_\sidx$ as a separate clock.
Thus, the best performing constant feedback policy can be constructed as follows: For each clockwork $C_\sidx$, the policy chooses those control parameters which maximize the highest signal-to-noise ratio, achievable by any integrated current $N^{(\sidx)}(t)$, that counts jumps only in clockwork $C_{\sidx}$. 
From this analysis, it is clear that for two qubit clockworks of the form above, the SNR of any constant feedback policy is upper bounded by $2\mathmerit^*\approx2.38\Gamma$ for any integrated current.

In contrast to this, we construct a general feedback policy that achieves a SNR of $\mathmerit \approx 2.59\Gamma$, beating this bound by roughly $9\%$, and that requires only two memory states, in \cref{app:go_example_qubit_clockwork}. 
Informally, this feedback policy keeps the angle constant at $\phi^{(\sidx)}=\phi^*$ for both clockworks $\sidx=1,2$, but switches between two choices of the energy $E^{(\sidx)}$ depending on which of the clockworks experienced a jump last. 
If the last jump occurred in clockwork $C_1$, the energy of clockwork $C_1$ is set to the value corresponding to the red circle in \cref{fig:SNR_qubit_clockwork_Python}, whereas the energy of clockwork $C_2$ is set to the value corresponding to the red triangle and vice versa if the last jump occurred in clockwork $C_2$.
This shows, that even though operating both clockworks at one of these two points constantly would lead to a suboptimal performance in terms of the signal to noise ratio, switching between the two point increases the overall performance of the clock.

\paragraph{Summary.} 
We have introduced a framework to describe feedback for ticking clocks, where the clockwork's dynamics are controlled based on classical information about the previous sequence of jumps in different clockworks.
In this framework, we differentiate between constant feedback policies, that do not apply feedback, and general feedback policies, that do apply feedback, and compare their performance in terms of the maximally achievable signal-to-noise ratio. 
For classical clockworks of dimension two, where the parameters controlling the clockwork's dynamics can be tuned in a symmetric range, we have shown that general feedback policies can always be outperformed by constant feedback policies. 
Whether this also holds in the most general case of classical clockworks of arbitrary dimension and with an arbitrary parameter space, remains an interesting question for future research. If this holds, this would provide an extremely interesting separation between classical and quantum clockworks, as we have provided an example of a feedback policy for two qubit clockworks, which outperforms any constant feedback policy for the same system. 
For ticking clocks, other separations between classical and quantum clockworks have already been shown for example in terms of a different scaling of the clock's quality with the dimension or the resolution \cite{Meier2023, yang2020ultimatelimittimesignal}.

We note that Rosal et.\ al.\ \cite{rosal2025deterministicequationsfeedbackcontrol1,rosal2025deterministicequationsfeedbackcontrol2,rosal2025deterministicequationsfeedbackcontrol3} have recently also proposed and analysed a model of feedback in open quantum systems. Whether some of their results can be translated to study the performance of feedback in ticking clocks, would be another interesting direction of future research in the search of a full thermodynamic description of atomic clocks.

\section*{Acknowledgements}
 The authors acknowledge fruitful discussions with Tim Ehrensberger, Katrin Holzfeind, Marcus Huber, Florian Meier, Yuri Minoguchi, Mark Mitchison, Monika, Nuriya Nurgalieva, Renato Renner, Ralph Silva, Mischa Woods and Jake Xuereb. The work presented here are results obtained during J.M.'s MSc ETH master thesis project \cite{Masterarbeit}. J.M. thanks Renato Renner for his support and Marcus Huber for hosting him at the Atominstitut, where the project was carried out. Currently, J.M. is a member of the Department of Mathematical Sciences, University of Copenhagen, Copenhagen, Denmark. P.E. acknowledges funding from the Austrian Federal Ministry of Education, Science, and Research via the Austrian Research Promotion Agency (FFG) through Quantum Austria project 914033. This project is co-funded by the European Union (Quantum Flagship project ASPECTS, Grant Agreement No.\ 101080167). 

\bibliography{references}


\clearpage
\newpage

\setcounter{secnumdepth}{2}
\onecolumngrid
\section*{Supplemental Material}

\appendix

\begin{appendices}

\section{Classical Markovian systems}
\label{app:classical_Markovian_sys}

We says that a quantum system behaves classical if its density matrix is diagonal in some preferred basis at all times.
In the case of Markovian systems, a system, which is classical in this sense, can equivalently be described as a continuous-time Markov chain.
In this section, we summarize the terminology for continuous-time Markov chains that will be used in the following sections.
Furthermore, for readers unfamiliar with the topic, we briefly outline how to switch between the two equivalent descriptions.\\

Consider a continuous-time Markov chain with states $\cidx\in\calphabet{}$.
The system is described by the distribution $\mathbf{p}(t)=(p_1(t),\ldots,p_{|\calphabet{}|}(t))^\mathrm{T}\in\mathbb{R}^{|\calphabet{}|}$, where $p_{\cidx}(t)$ denotes the probability for the system to occupy state $\cidx$ at time $t$. 
The system's time evolution is determined by a classical Markovian Master equation $\odv{\mathbf{p}}{t}=\mathrm{L}\mathbf{p}(t)$, where the rate matrix~$\mathrm{L}$ has elements
\begin{equation}
    L_{\altcidx\cidx}=\begin{cases}
        R_{\altcidx\cidx} &,\mathrm{if}\;\cidx\neq\altcidx\\
        -\Gamma_{\cidx} &,\mathrm{if}\;\cidx=\altcidx
    \end{cases}
\end{equation} 
where we defined $\Gamma_{\cidx}\coloneq\sum_{\altcidx\in\calphabet{}}R_{\altcidx\cidx}$.
Requiring $0\leq R_{\altcidx\cidx}<\infty$, we can interpret $R_{\altcidx\cidx}$ as the rate, at which a jump of the system from state $\cidx$ to state $\altcidx$ occurs, and $\Gamma_\cidx$ as the total rate of escape from state $\cidx$. 
That is the rate at which jumps of the system from state $\cidx$ to any other state occur.
By convention, we set $R_{\cidx\cidx}=0$ for all $\cidx$.\\

One can equivalently describe the system by replacing the distribution $\mathbf{p}(t)$ with a diagonal density operator $\rho(t)=\mathrm{diag}(p_1(t),\ldots,p_{|\calphabet{}|}(t))$.
Furthermore, replacing the classical with a quantum Markovian Master equation $\odv{\rho}{t}=\Lindblad{}{\rho}{}$ yields the same dynamics, for a Lindblad operator $\Lindblad{}{}{}$ with trivial Hamiltonian $H=0$ and jump operators $J_{\altcidx\cidx}=\sqrt{R_{\altcidx\cidx}}\ketbra{\altcidx}{\cidx}$.
For convenience, we have labelled the jump operators by two indices $\cidx,\altcidx$ instead of a single index $\jidx$.
Note, that the Lindblad operator must necessarily be of this form in order for a density operator describing the system to behave classically, i.e., to remain diagonal at all times $t$ during its time evolution~\cite[Section II.D]{Woods2022}.\\

\section{\label{app:FCS_for_clocks}Analyzing clock performance: exact calculation and upper bounds}

In this section, we briefly summarize how one can calculate exactly or obtain an upper bound on the asymptotic signal-to-noise ratio of a ticking clock, that satisfies \textST{} and \textCWI{} and that is initialized in a steady state.
As discussed in the main text, we can view such a ticking clock as a Markovian quantum system, that is initialized in a steady state, together with an integrated current.
With this view on a ticking clock in mind, we will not mention ticking clocks explicitly in the rest of this section.

\subsection{Exact calculation using full counting statistics}
Given an explicit construction one may use tools from full counting statistics, which we summarize below, in order to calculate the SNR.
For an extensive and pedagogical introduction to this topic refer to \cite{bridging_the_gap}.\\

\paragraph{Vectorization of density matrices and superoperators.}
Let us fix a basis such that we can identify linear operators and matrices.
Consider a linear operator $A=\begin{pmatrix}\mathbf{a}_1\cdots \mathbf{a}_n\end{pmatrix}\in\mathbb{C}^{n\times n}$ consisting of column vectors $\mathbf{a}_1,\ldots,\mathbf{a}_n\in\mathbb{C}^n$.
We define the (basis-dependent) vectorization of $A$ as the vector 
\begin{equation}
     \vecket{A} \coloneq \begin{pmatrix}
    \mathbf{a}_1\\
    \vdots\\
    \mathbf{a}_n\\
    \end{pmatrix}\in\mathbb{C}^{n^2},
\end{equation} 
consisting of the columns of $A$ stacked on top of each other.
Denoting the conjugate transpose of $\vecket{A}$ as $\vecbra{A}$, it is clear that the trace of $A$ can be written as $\trace[A] = \vecbraket{\mathbb{1}}{A}$,
where $\mathbb{1}$ denotes the identity matrix.
Consider a superoperator $\mathcal{L}$ acting by left- and right-multiplication, i.e., $\mathcal{L}[A]=BAC$ for $A,B,C\in\mathbb{C}^{n\times n}$.
Then, the vectorization of $\mathcal{L}[A]$ is given by 
\begin{equation}
    \vecket{\mathcal{L}[A]} = \underbrace{C^\mathrm{T}\otimes B}_{\eqcolon \widehat{\mathcal{L}}} \vecket{A},
\end{equation}
where $C^\mathrm{T}$ denotes the transpose of $C$ and $\otimes$ denotes the Kronecker product.
We refer to the matrix $\widehat{\mathcal{L}}$ as the vectorization of the superoperator $\mathcal{L}$.
By linearity, this definition can be extended to superoperators that act as linear combinations of superoperators of the form above. \\

\paragraph{Calculating the average current and noise.}
Consider a finite-dimensional quantum system governed by a Markovian Master equation with Lindblad operator $\Lindblad{}{}{}=-i[H,\bullet]+\sum_{\jidx\in\jumpalphabet{}}\dissipator{J_\jidx}{}$, with Hamiltonian $H$ and jump operators $J_{\jidx}$, as well as an integrated current $N(t)$. 
Viewing the system's evolution from the equivalent stand point of a stochastic Master equation \cite{bridging_the_gap}, it becomes clear that $N(t)$ the is a random variable that depends on the quantum trajectory the system has taken until time $t$.
We define the average current $F(t)$ and noise $D(t)$ as 
\begin{equation}
    F(t) \coloneq \odv{}{t}\expect{t}{N} \quad \mathrm{and} \quad
    D(t) \coloneq \odv{}{t}\var{t}{N},
\end{equation}
where $\expect{t}{N}$ and $\var{t}{N} = \expect{t}{N^2}-\expect{t}{N}^2$ denote the average and variance of the integrated current $N(t)$ taken over all possible trajectories at time $t$.
Assuming that the system is initialized in a steady state $\rho^{ss}$, one may show that
\begin{equation}
    \label{eq:average_current}
    F(t) = \vecbra{\mathbb{1}}\sum_{\jidx\in\jumpalphabet{}} \weights{\jidx} \mleft(J_{\jidx}^*\otimes J_{\jidx}\mright) \vecket{\rho^{ss}}
\end{equation}
and that $D(t)=D_1(t)-2D_2(t)$, where we defined
\begin{equation}
    \label{eq:noise_term_1}
    D_1(t) = \vecbra{\mathbb{1}}\sum_{\jidx\in\jumpalphabet{}} \weights{\jidx}^2 \mleft(J_{\jidx}^*\otimes J_{\jidx}\mright) \vecket{\rho^{ss}}
\end{equation}
and
\begin{equation}
    \label{eq:noise_term_2}
     D_2(t) = -
      \vecbra{\mathbb{1}} \sum_{\jidx\in\jumpalphabet{}} \weights{\jidx} \mleft(J_{\jidx}^*\otimes J_{\jidx}\mright) \underbrace{\mleft(\int_{0}^t \odif{\tau}\mleft(\mathrm{e}^{\matLindblad{}{\tau}} - \vecketbra{\rho^{ss}}{\mathbb{1}}\mright) \mright)}_{\eqcolon - \matLindblad{}{^+(t)}}
     \sum_{\altjidx\in\jumpalphabet{}} \weights{\altjidx} \mleft(J_{\altjidx}^*\otimes J_{\altjidx}\mright) \vecket{\rho^{ss}},
\end{equation}
where the vectorization of the Lindblad operator is given by 
\begin{equation}
    \matLindblad{}{} = -i\mleft(\mathbb{1}\otimes H - H^{\mathrm{T}}\otimes \mathbb{1}\mright)
    +\sum_{\jidx\in\jumpalphabet{}}\mleft( J_{\jidx}^*\otimes J_{\jidx} -\frac{1}{2}\mathbb{1}\otimes J_{\jidx}^\dagger J_{\jidx}-\frac{1}{2}  J_{\jidx}^\mathrm{T} J_{\jidx}^* \otimes\mathbb{1}\mright).
\end{equation}

From~\cref{eq:average_current,eq:noise_term_1}, we observe that the average current $F(t)$ and the term $D_1(t)$ in the expression for the noise are independent of time $t$.
Note, that for a system not initialized in a steady state this is in general not the case.
For the term $D_2(t)$ in the expression for the noise, only the upper integration limit in $\matLindblad{}{^+(t)}$ depends on $t$.
For large integration times, the integral $\int_0^t\odif{\tau}$ may be replaced by $\int_0^\infty\odif{\tau}$, i.e., $\matLindblad{}{^+(t)}$ may be replaced by $\matLindblad{}{^+}\coloneq \lim_{t\rightarrow\infty}\matLindblad{}{^+}(t)$.
Therefore, $D(t)$ becomes effectively time-independent for large times $t$.
From these observations, one finds that in the limit of large times both the expectation value $\expect{t}{N}$ and the variance $\var{t}{N}$ grow linearly in time $t$ \cite{bridging_the_gap}.\\

Thus, from \cref{eq:SNR} one finds that the asymptotic signal-to-noise ratio is given by $\mathmerit=F^2/D$, where we defined the asymptotic average current $F\coloneq\lim_{t\rightarrow\infty}F(t)$ and noise $D\coloneq\lim_{t\rightarrow\infty}D(t)$. 
Given an explicit expression for $\matLindblad{}{^+}$, one can easily evaluate \crefrange{eq:average_current}{eq:noise_term_2} to determine $F,D$ and $\mathmerit$.\\

\paragraph{The generalized inverse $\matLindblad{}{^+}$.}
For a matrix $A$, that fulfills $\mathrm{rank}(A)=\mathrm{rank}(A^2)$, the group inverse is the unique matrix $A^\#$ that satisfies
\begin{gather}
    A A^\# A = A  \label{eq:group_inverse_prop_1} \\ 
    A^\# A A^\# = A^\# \label{eq:group_inverse_prop_2} \\
    A^\# A = A A^\# \label{eq:group_inverse_prop_3}.
\end{gather}
For a Lindblad operator, whose steady state $\rho^{ss}$ is unique, $\matLindblad{}{^+}=\lim_{t\rightarrow\infty}\matLindblad{}{^+(t)} $ is known to be the group inverse of the matrix $\matLindblad{}{}$.
For convenience, we briefly outline a sketch of the proof.
The fact that the algebraic and geometric multiplicities of the zero eigenvalue of $\matLindblad{}{}$ coincide \cite{PhysRevA.81.062306, PhysRevA.98.042118}, implies $\mathrm{rank}(\matLindblad{}{})=\mathrm{rank}(\matLindblad{}{}^2)$.
Furthermore, the existence of a unique steady state implies that the steady state is attractive and all non-zero eigenvalues have negative real part \cite{PhysRevA.81.062306, Zhang_2024}.
In this case, $\matLindblad{}{^+}$ is well defined and determining $\matLindblad{}{^+}$ from a Jordan decomposition of $\matLindblad{}{^+}$ \cite[Appendix A]{PhysRevLett.113.240406}, allows one to check \crefrange{eq:group_inverse_prop_1}{eq:group_inverse_prop_3} easily.\\

Under the additional assumption of $\matLindblad{}{}$ being diagonalizable, one may show that
\begin{equation}
    \label{eq:Drazin_Moore_penrose}
    \matLindblad{}{^+}=\mleft(\mathbb{1} - \vecketbra{\rho^{ss}}{\mathbb{1}}\mright) \matLindblad{}{^{MP}}\mleft(\mathbb{1} - \vecketbra{\rho^{ss}}{\mathbb{1}}\mright),
\end{equation}
where $\matLindblad{}{^{MP}}$ denotes the Moore-Penrose pseudo-inverse of the matrix $\matLindblad{}{}$, which can be obtained via a singular value decomposition~\cite[Appendix L]{bridging_the_gap}.\\

\subsection{Hyperaccurate currents and uncertainty relations}
\label{app:uncertainty_relations}

In the previous subsection, we have discussed how the asymptotic signal-to-noise ratio can be calculated exactly via the asymptotic average current and noise.
For the proof of \cref{thm:classical_SNR_bound} in \cref{app:proof_no_go}, it suffices to determine an upper bound on the SNR instead.
In literature, such upper bounds on the SNR $\mathmerit(t)$ are known as uncertainty relations, which come in many different flavors.
For example, for classical systems that satisfy local detailed balance and are in a steady state, $\mathmerit(t)$ can be bounded in terms of the system's entropy production \cite{Barato_Seifert_og_TUR}.
As we cannot possibly give a concise overview of the topic of uncertainty relations here, we restrict ourselves to a brief discussion of the kinetic uncertainty relation \cite{Terlizzi_Baiesi_KUR} and the clock uncertainty relation \cite{Prech2025}, for classical Markovian systems as they were discussed in \cref{app:classical_Markovian_sys}.
Note, that some generalizations of uncertainty relations to quantum Markovian systems exist and have gained significant attention recently, see e.g. \cite{Hasegawa2020,Sato2022}. 
We suspect that tight uncertainty relations in the quantum case will be highly useful tools in determining tight bounds on the performance of quantum clockworks in future\cite{Horowitz2020,Hasegawa2020,Hasegawa2021,Moreira2025}.\\

\paragraph{Kinetic uncertainty relation.}
 Let $N(t)=\sum_{\cidx\neq\altcidx}\weights{\altcidx\cidx}N_{\altcidx\cidx}(t)$ be an arbitrary counting observable, where $N_{\altcidx\cidx}(t)$ denotes the elementary integrated current counting the number of jumps from state $\cidx$ to $\altcidx$ that have occurred until time $t$ and where $\weights{\altcidx\cidx}$ is the weight with which such jumps are counted in $N(t)$.
 In \cite{Terlizzi_Baiesi_KUR}, Terlizzi and Baiesi showed that the signal-to-noise ratio of $N(t)$ is upper bounded by
 \begin{equation}
    \label{eq:transient_KUR}
    \mathmerit(t) \leq \frac{\expect{t}{N_{\mathrm{tot}}}}{t} \quad\forall t,
\end{equation}
where $N_{\mathrm{tot}}(t)=\sum_{\cidx\neq\altcidx}N_{\altcidx\cidx}(t)$ is the counting observable that counts every occurring jump with equal weight $\weights{\altcidx\cidx}=1$ for all $\cidx\neq\altcidx$. 
If the system is initialized in a stationary distribution $\mathbf{p}^{ss}$, i.e., a state satisfying $\mathrm{L}\mathbf{p}^{ss}=0$, the expectation value can easily be computed.
Since $\expect{t}{N_{\mathrm{tot}}}$ grows linearly in time for a stationary system, the right-hand side of the inequality above will be equal to the average current $F_{\mathrm{tot}}$.
By switching from a description in terms of a continuous-time Markov chain to a description in terms of a diagonal Markovian quantum system, as described in \cref{app:classical_Markovian_sys}, and using \cref{eq:average_current} one finds 
\begin{equation}
        \mathmerit(t)\leq\mathdynact\quad\forall t,
\end{equation}
where $\mathdynact\coloneq\sum_{\cidx\in\calphabet{}}\Gamma_\cidx p_{\cidx}^{ss}$ is the dynamical activity.\\
    
\paragraph{Clock uncertainty relation.}
Since the right-hand side in the kinetic uncertainty relation does not dependent on the weights of current $N(t)$, one would not expect the inequality to be tight in general.
This is remedied by the clock uncertainty relation derived by Prech et al.\ \cite{Prech2025}.
Consider a continuous-time Markov chain with a unique stationary distribution $\mathbf{p}^{ss}$ and let the initial distribution be equal to that stationary distribution. 
For any integrated current $N(t)$ the asymptotic signal-to-noise ratio is bounded by
\begin{equation}
    \label{eq:CUR}
    \mathmerit\leq \mathrestim^{-1},
\end{equation}
where we defined the mean residual time $\mathrestim\coloneq\sum_{\cidx\in\calphabet{}} \frac{p_{\cidx}^{ss}}{\Gamma_{\cidx}}$.
In contrast to the kinetic uncertainty relation, the clock uncertainty relation can be violated at finite times. 
However, the clock uncertainty relation is tight in the following sense.
For the integrated current 
$N_{BLUE}(t) \coloneq \sum_{\altcidx\neq\cidx}\frac{1}{\Gamma_{\cidx}}{N_{\altcidx,\cidx}(t)}$, \cref{eq:CUR} is satisfied with equality. An integrated current with this property is called a hyperaccurate current \cite{Prech2025}.\\

\section{\label{app:clock_quality} Quantifying clock precision}
In this section, we compare two ways of quantifying the quality of the time estimate that a clock produces, the accuracy $\accuracyN$ and signal-to-noise ratio $\mathmerit$, and argue why $\mathmerit$ is the correct choice for our setting.
For general ticking clocks, Silva et al.~\cite{RNH_ticking_clocks} introduced the accuracy
\begin{equation}
    \accuracyN \coloneq \lim_{t\rightarrow\infty}\frac{\expect{t}{N}}{\var{t}{N}},
\end{equation}
as a figure of merit, where $N(t)$ denotes the random variable that models the outcome of a time measurement on the clock's tick register performed at background time $t$, i.e., a random variable that takes on the register index $n$ with probability $\trace[\rho_{C_n}^{(n)}(t)]$. 
Furthermore, they show that in the case of so-called reset clocks $\accuracyN$ reduces to previously used definitions of clock accuracy \cite{Florian_thesis,yang2020ultimatelimittimesignal,Erker2017,Milburn2020,Woods2021, Meier2023,Xuereb2023}.
As discussed in the main text, in the case of ticking clocks that fulfill \textST{} and \textCWI{} the random variable $N(t)$ is an integrated current.
Throughout the rest of this section, we restrict our attention to this case and additionally always assume the clock to be initialized in a steady state. 
Then, the accuracy is closely related to the asymptotic signal-to-noise ratio 
\begin{equation}
    \mathmerit = \lim_{t\rightarrow\infty} \frac{\mleft(\odv{}{t}\expect{t}{N}\mright)^2}{\var{t}{N}/t},
\end{equation}
which in the context of ticking clocks was first used in~\cite{Prech2025}.
Using the fact that for systems initialized in their steady state the expectation value $\expect{t}{N}$ and variance $\var{t}{N}$ grow linearly with $t$ in the limit of large $t$, we have
\begin{equation}
    \mathmerit = \accuracyN F = \frac{F^2}{D},
\end{equation}
where $F$ and $D$ denote the asymptotic average current and noise, as defined in~\cref{app:FCS_for_clocks}. 
We remark that in the special case of reset clocks, mentioned above, the average frequency $F$ equals the resolution $\nu$ from the introduction~\cite{RNH_ticking_clocks}.
Apart from $F$, other generalizations of the resolution to ticking clocks can be considered.
For an example, that further leads to a notion of accuracy different from the one presented here, we refer to~\cite{Prech2025}. \\

\paragraph{$\mathmerit$ versus $\accuracyN$.}
A figure of merit is used compare clocks $A$ and $B$ with one another and decide which one of the two performs better.
For this comparison to be fair, one should ensure that both clocks, $A$ and $B$, measure the same quantity, i.e., that they both measure time in the same unit.
In the case of ticking clocks, the clock's unit of time is related to the asymptotic average current $F=\lim_{t\rightarrow\infty}\odv{}{t}\expect{t}{N}$. Since the average current measures the average increment by which $N(t)$ increases while one second of background time passes, the inverse $1/F$ determines the time interval (in seconds) which passes on average until the counter $N(t)$ increases by one. 
Therefore, to ensure that clocks $A$ and $B$ measure time in the same unit, we need to ensure that the clock's average frequencies, $F^{(A)}$ and $F^{(B)}$, are equal. 
If they are not, we can imagine to perform one of the following operations.\\

Firstly, one could speed up or slow down the dynamics of clock $B$ by replacing the Lindblad operator $\Lindblad{B}{}{}$ with $\alpha \Lindblad{B}{}{}$.
We refer to this operation as rescaling the dynamics by a factor of $\alpha$.
Secondly, one could post-process the time signal obtained from clock $B$ differently. 
Imagine that it measures time in discrete steps such that $N^{(B)}(t)$ increases by an increment of one for each observed jump.
Then, counting only every second jump, for example, would correspond to replacing $N^{(B)}(t)$ by $N'^{(B)}(t) = \lfloor N^{(B)}(t)/2\rfloor$. 
In the more general case, we replace $N^{(B)}(t)$ by $N'^{(B)}(t) = N^{(B)}(t)/\alpha$, which is equivalent to replacing the weights $\weights{\jidx}$ by $\weights{\jidx}'=\weights{\jidx}/\alpha$, and refer to this operation as rescaling the weights by a factor of $\alpha$.\\

\begin{proposition}
    \label{lem:rescaling_F_and_D}
    Consider a ticking clock satisfying \textST{} and \textCWI{} that is initialized in a steady state $\rho^{ss}$.
    Let $F$ ($D$) and $F'$ ($D'$) denote the average current (noise) before and after a rescaling. Then, we have
    \begin{equation*}
        F' = \alpha F\quad\text{and}\quad D'= \alpha D,
    \end{equation*}
    when rescaling the dynamics by a factor of $\alpha$ and
    \begin{equation*}
        F' = \frac{F}{\alpha}\quad\text{and}\quad D'= \frac{D}{\alpha^2},
    \end{equation*}
    when rescaling the weights by a factor of $\alpha$.
\end{proposition}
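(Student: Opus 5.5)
The plan is to work directly with the full-counting-statistics expressions \cref{eq:average_current,eq:noise_term_1,eq:noise_term_2}, which give $F$, $D_1$ and $D_2$ in terms of the jump superoperators $J_\jidx^*\otimes J_\jidx$, the weights $\weights{\jidx}$, the steady state $\vecket{\rho^{ss}}$ and the limit $\matLindblad{}{^+}=\lim_{t\to\infty}\matLindblad{}{^+(t)}$. I then track how each ingredient transforms under the two rescalings. Since $D=D_1-2D_2$, it suffices to show that $F$, $D_1$ and $D_2$ each pick up the claimed factor.

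\emph{Rescaling the weights} is the easy case. Replacing $\weights{\jidx}$ by $\weights{\jidx}/\alpha$ leaves the Lindbladian, and hence $\rho^{ss}$ and $\matLindblad{}{^+}$, unchanged. Each term is then a homogeneous polynomial in the weights. $F$ is linear, so $F'=F/\alpha$. $D_1$ is quadratic ($\weights{\jidx}^2$), and $D_2$ is bilinear ($\weights{\jidx}\weights{\altjidx}$), so both pick up $1/\alpha^2$, giving $D'=D/\alpha^2$. An even simpler alternative is to note that $N'(t)=N(t)/\alpha$ pathwise, so $\expect{t}{N'}=\expect{t}{N}/\alpha$ and $\var{t}{N'}=\var{t}{N}/\alpha^2$, and then differentiate in $t$.

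\emph{Rescaling the dynamics} means $\Lindblad{}{}{}\mapsto\alpha\Lindblad{}{}{}$ with $\alpha>0$. This is realized by $H\mapsto\alpha H$ and $J_\jidx\mapsto\sqrt{\alpha}J_\jidx$, so $J_\jidx^*\otimes J_\jidx\mapsto\alpha\, J_\jidx^*\otimes J_\jidx$ and $\matLindblad{}{}\mapsto\alpha\matLindblad{}{}$. The kernel is unchanged, so $\rho^{ss}$ is the same steady state. Consequently $F'=\alpha F$ and $D_1'=\alpha D_1$ immediately.

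For $D_2$ the two jump factors contribute $\alpha^2$, so I need $\matLindblad{}{^+}\mapsto\alpha^{-1}\matLindblad{}{^+}$. I would obtain this from the integral representation by substituting $\tau'=\alpha\tau$:
\[
\int_0^\infty \odif{\tau}\,\bigl(\mathrm{e}^{\alpha\matLindblad{}{}\tau}-\vecketbra{\rho^{ss}}{\mathbb{1}}\bigr)=\frac{1}{\alpha}\int_0^\infty\odif{\tau'}\,\bigl(\mathrm{e}^{\matLindblad{}{}\tau'}-\vecketbra{\rho^{ss}}{\mathbb{1}}\bigr).
\]
Equivalently, when the steady state is unique, the group inverse satisfies $(\alpha A)^\#=\alpha^{-1}A^\#$, which one checks directly against \crefrange{eq:group_inverse_prop_1}{eq:group_inverse_prop_3}. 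Hence $D_2'=\alpha^2\alpha^{-1}D_2=\alpha D_2$, and therefore $D'=\alpha D$.

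The only delicate point is $D_2$. There the scaling relies on the $t\to\infty$ limit: at finite $t$ the substitution changes the upper limit to $\alpha t$, which is harmless only asymptotically. It also needs convergence of the integral, which holds because the non-zero eigenvalues have negative real part. So I would state the result for the asymptotic quantities $F$ and $D$ and note that convergence is preserved for $\alpha>0$. As a sanity check, the second case also follows from the time-change $N'(t)=N(\alpha t)$ in distribution, so both the mean and the variance grow $\alpha$ times faster.
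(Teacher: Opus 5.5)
Your proposal is correct and follows essentially the same route as the paper's proof. You rescale $H\mapsto\alpha H$ and $J_j\mapsto\sqrt{\alpha}J_j$, use the substitution $\tau'=\alpha\tau$ to get $\widehat{\mathcal{L}}^{+}\mapsto\alpha^{-1}\widehat{\mathcal{L}}^{+}$, and read off the factors in $F$, $D_1$ and $D_2$ from linearity in the jump superoperators and homogeneity in the weights. The remarks you add beyond the paper are valid refinements it leaves implicit: the pathwise and time-change arguments, the requirement $\alpha>0$, and the point that the $D_2$ scaling holds only asymptotically.
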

\begin{proof}
    Let $\Lindblad{}{}{}$ be the clock's Lindblad operator. From~\cref{app:FCS_for_clocks}, recall
    \begin{equation}
        F = \vecbra{\mathbb{1}}\sum_{\jidx\in\jumpalphabet{}} \weights{\jidx} \mleft(J_{\jidx}^*\otimes J_{\jidx}\mright) \vecket{\rho^{ss}},
    \end{equation}
    and $D=D_1-2 D_2$ with 
    \begin{equation}
        D_1 = \vecbra{\mathbb{1}}\sum_{\jidx\in\jumpalphabet{}} \weights{\jidx}^2 \mleft(J_{\jidx}^*\otimes J_{\jidx}\mright) \vecket{\rho^{ss}},
    \end{equation}
    and
    \begin{equation}
         D_2 = -
          \vecbra{\mathbb{1}} \sum_{\jidx\in\jumpalphabet{}} \weights{\jidx} \mleft(J_{\jidx}^*\otimes J_{\jidx}\mright) \matLindblad{}{^+}     \sum_{\altjidx\in\jumpalphabet{}} \weights{\altjidx} \mleft(J_{\altjidx}^*\otimes J_{\altjidx}\mright) \vecket{\rho^{ss}},
    \end{equation}
    where $\matLindblad{}{^+}$ is the generalized inverse of $\matLindblad{}{}$.
   
    When rescaling the dynamics of $\Lindblad{}{}{}$ by a factor of $\alpha$, it is clear that rescaling the Lindblad operator translates to rescaling the Hamiltonian as $H'=\alpha H$ and the jump operators as $J_{\jidx}'=\sqrt{\alpha}J_{\jidx}$.
    For the generalized inverse we find
    \begin{equation}
        \matLindblad{}{'^+} 
        = \int_{0}^\infty \odif{\tau}\mleft(\mathrm{e}^{\alpha\matLindblad{}{\tau}} - \vecketbra{\rho^{ss}}{\mathbb{1}}\mright) 
        = \frac{1}{\alpha} \int_{0}^\infty \odif{\tau'}\mleft(\mathrm{e}^{\matLindblad{}{\tau'}} - \vecketbra{\rho^{ss}}{\mathbb{1}}\mright) = \frac{1}{\alpha} \matLindblad{}{^+},
    \end{equation}
    as one would expect from a generalized inverse. 
    By linearity, we immediately obtain
    \begin{equation}
        F = \frac{F'}{\alpha},\quad D_1=\frac{D'_1}{\alpha} \quad\text{and}\quad D_2=\frac{D'_2}{\alpha}.
    \end{equation}

    When rescaling the weights by a factor of $\alpha$, the Lindblad operator and jump operators are unchanged and the new weights are given by $\weights{\jidx}'=\frac{\weights{\jidx}}{\alpha}$.
    By linearity, we immediately find
    \begin{equation}
        F = \alpha F',\quad D_1=D'_1 \alpha^2 \quad\text{and}\quad D_2=D'_2 \alpha^2.
    \end{equation} 
\end{proof}

$\,$\\

\begin{corollary}
    \label{cor:rescaling_N_and_S}
    Consider a ticking clock satisfying \textST{} and \textCWI{} that is initialized in a steady state $\rho^{ss}$.
    Let $\mathmerit$ ($\accuracyN$) and $\mathmerit'$ ($\accuracyN'$) denote the signal-to-noise ratio (accuracy) before and after a rescaling.
    Then, we have
    \begin{equation*}
        \mathmerit' = \alpha \mathmerit
        \quad\text{and}\quad 
        \accuracyN' = \accuracyN,
    \end{equation*}
    when rescaling the dynamics by a factor of $\alpha$ and
    \begin{equation*}
        \mathmerit' = \mathmerit
        \quad\text{and}\quad 
        \accuracyN' = \alpha \accuracyN,
    \end{equation*}
    when rescaling the weights by a factor of $\alpha$.
\end{corollary}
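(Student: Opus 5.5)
The corollary follows directly from \cref{lem:rescaling_F_and_D} together with the identities $\mathmerit = F^2/D$ and $\accuracyN = \mathmerit/F = F/D$ derived above. These identities hold in the asymptotic limit for a clock initialized in its steady state, where $\expect{t}{N}$ and $\var{t}{N}$ grow linearly in $t$. So the plan is to substitute the rescaled $F'$ and $D'$ into these two expressions and simplify. No further full counting statistics computation is needed.

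The first step is to check that the hypotheses are preserved, so that the formulas for $\mathmerit'$ and $\accuracyN'$ are valid after rescaling. When the weights are rescaled, the Lindblad operator and hence the steady state are unchanged. When the dynamics are rescaled by $\alpha>0$, we have $\alpha\Lindblad{}{\rho^{ss}}{}=0$, so $\rho^{ss}$ remains a steady state of $\alpha\Lindblad{}{}{}$. It also remains the unique one, since the kernel is the same. Therefore the clock is still initialized in a steady state and the asymptotic expressions apply.

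The second step is the computation. Under rescaling of the dynamics,
\begin{equation*}
    \mathmerit' = \frac{(\alpha F)^2}{\alpha D} = \alpha\mathmerit, \qquad \accuracyN' = \frac{\alpha F}{\alpha D} = \accuracyN .
\end{equation*}
Under rescaling of the weights,
\begin{equation*}
    \mathmerit' = \frac{(F/\alpha)^2}{D/\alpha^2} = \mathmerit, \qquad \accuracyN' = \frac{F/\alpha}{D/\alpha^2} = \alpha\accuracyN .
\end{equation*}

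The only subtlety is the implicit assumption that $\alpha>0$ and $D\neq 0$, so that the quotients are defined. I would note this explicitly, since a negative $\alpha$ would not define a valid Lindbladian. Beyond that the argument is purely algebraic, and I do not expect any real obstacle.
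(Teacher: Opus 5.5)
Your proof is correct and is essentially the paper's own argument: the paper gives no separate proof, treating the corollary as immediate from \cref{lem:rescaling_F_and_D} together with $\mathmerit=\accuracyN F=F^2/D$, which is exactly the substitution you carry out. One small refinement: $\alpha>0$ is needed only when rescaling the dynamics, while for rescaling the weights $\alpha\neq 0$ suffices.
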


To make the comparison between clocks $A$ and $B$ fair, one should, instead of comparing them directly, compare clock $A$ with a version of clock $B$ that is rescaled such that the average frequencies of clock $A$ and of the rescaled version of clock $B$ match, i.e., such that $F^{(A)}=F'^{(B)}$.
This can be achieved for both types of rescaling by choosing the parameter $\alpha$ according to \cref{lem:rescaling_F_and_D}. 
Which type of rescaling to consider depends on which operation is trivial, in the sense that it does not affect the answer to the question that we are examining.\\

If one is interested in constructing the best possible clock for a clockwork space of fixed dimension while satisfying \textST{} and \textCWI{}, such as in \cite{yang2020ultimatelimittimesignal}, rescaling the dynamics can be considered trivial.
As rescaling the dynamics does not change the structure of the Lindblad operator $\Lindblad{}{}{}$, both clock $B$ and its rescaled version will have the same structure.
Therefore, it is always admissible to compare clock $A$ to a rescaled version of $B$ instead of directly comparing clock $A$ to $B$.
According to \cref{cor:rescaling_N_and_S}, this makes the accuracy $\accuracyN$ the natural choice to quantify the clock's performance, since $\accuracyN$ is invariant when rescaling the the dynamics.
Without having to explicitly perform the rescaling of the dynamics of clock $B$, we can immediately compare $\accuracyN^{(A)}$ with $\accuracyN^{(B)}=\accuracyN'^{(B)}$.\\

It is worth keeping in mind, however, that the comparison is made not between clocks $A$ and $B$, but between $A$ and the rescaled version of clock $B$.
This distinction between clock $B$ and its rescaled version becomes important if we take the view of a resource theory.
Since the rescaling might also affect the resources that clock $B$ consumes, rescaling the dynamics cannot be considered trivial.
If we assume that post-processing the time information differently, i.e., implementing an integrated current $N'^{(B)}(t)$ instead of $N^{(B)}(t)$, does not influence the resources consumed by clock $B$, rescaling the weights can be considered trivial.
According to \cref{cor:rescaling_N_and_S}, this makes the signal-to-noise ratio $\mathmerit$ the natural choice to quantify clock performance, since $\mathmerit$ is invariant when rescaling the weights.
Without explicitly having to rescale the weights, we can directly compare $\mathmerit^{(A)}$ to $\mathmerit^{(B)}=\mathmerit'^{(B)}$.\\

When comparing two different feedback policies, as they were introduced in the main text, a similar argument applies.
In general, implementing two different feedback policies leads to two different average frequencies of the time estimate that is produced by the control unit.
Therefore, a rescaling is necessary for a fair comparison between the two policies.
In this case, rescaling the dynamics cannot be considered trivial, since it affects the control parameters that are chosen by the feedback policy.
In particular, rescaling the dynamics by a factor of $\alpha$ can equivalently be viewed as a new choice of the control parameters determining the Hamiltonian's and jump operator's strength (see energy-based and jump-strength-based feedback in \cref{app:assumptions_feedback_policy}). Therefore, rescaling the dynamics transforms a feedback policy into a different feedback policy.
In contrast, rescaling the weights is permitted in this scenario, as it corresponds to a change of how the control unit post-processes the obtained time information to estimate the time.
Consequently, we measure the feedback policy's performance in terms of the signal-to-noise ratio $\mathmerit$.\\

Before we conclude this section, we present an alternative view on the signal-to-noise ratio. 
From~\cref{cor:rescaling_N_and_S}, observe that rescaling the weights by a factor of $\alpha=F$ gives a rescaled current $N'(t)$ with average current $F'= 1$.
Viewing the integrated current as an estimator of the parameter time, such an estimator $N'(t)$ would be called unbiased \cite{Prech2025, estimation_theory_2}.
Since we have $\mathmerit=\mathmerit'=1/D'$, we can interpret $\mathmerit$ of an integrated current $N(t)$ to be the inverse of the noise $D'$ of its equivalent unbiased estimator $N'(t)$.\\

\section{Implicit assumptions in (constant) feedback policies}
\label{app:assumptions_feedback_policy}
In this section, we explicitly state the assumptions that are implicit in \cref{def:feedback_policy} of a feedback policy and the definition of a constant feedback policy.
Firstly, we assume that the experimental control, that the control unit exercises over clockwork $C_{\sidx}$ and its time evolution, can be described by the ability to choose between different sets of parameters specifying the Lindblad operator and hence future time evolution of clockwork $C_\sidx$.\\
\begin{assumption}
    \label{ass:Lindblad_control}
    The Lindblad operator of $C_\sidx$ depends on $K^\sidx$ different parameters $\mathbf{c}\in\parameterspace{\sidx}$, where $\parameterspace{\sidx}\subseteq\mathbb{R}^{K^{\sidx}}$ denotes the set of all allowed combinations of parameters, i.e., we have
    \begin{equation*}
        \Lindblad{C_{\sidx}}{}{\mathbf{c}} = -i[H_{C_{\sidx}}(\mathbf{c}),\bullet] + \sum_{\jidx}\dissipator{J_{C_{\sidx},\jidx}(\mathbf{c})}{}.
    \end{equation*}
\end{assumption}
The exact dependence of the Hamiltonian $H_{C_{\sidx}}(\mathbf{c})$ and jump operators $J_{C_{\sidx},\jidx}(\mathbf{c})$ on the parameters $\mathbf{c}$ and the restrictions on the set $\parameterspace{\sidx}$ determine which kind of control we have over the dynamics of system $C_{\sidx}$ and consequently which kind of feedback we can apply to system $C_\sidx$. \\
Below we give some natural models as examples:
\begin{itemize}
    \item \emph{Energy-based feedback} assumes that we have control over the Hamiltonian's strength, i.e., $H_{C_{\sidx}}(c) = c H_{C_{\sidx}}$, for some fixed Hamiltonian $H_{C_{\sidx}}$ and $\parameterspace{\sidx}\subseteq\mathbb{R}$.
    
\item \emph{Time-based feedback} assumes that we can change the state after a jump occurred, by performing a set of unitary operations $\{U_{C_{\sidx}}(\mathbf{c})\}_{\mathbf{c}\in\parameterspace{\sidx}}$, i.e., 
    $J_{C_{\sidx},\jidx}(\mathbf{c}) = U_{C_{\sidx}}(\mathbf{c})J_{C_{\sidx},\jidx}$, for some fixed jump operator $J_{C_{\sidx},\jidx}$.
    Note, that due to unitarity the backaction term in $\dissipator{J_{C_{\sidx},\jidx}(\mathbf{c})}{}$ remains unchanged.

    \item \emph{Jump-strength-based feedback} assumes that we have control over the rate or measurement strength, i.e., $J_{C_{\sidx},\jidx}(c) = \sqrt{c} J_{C_{\sidx},\jidx}$ for some fixed jump operator $J_{C_{\sidx},\jidx}$ and $\parameterspace{\sidx}\subseteq\mathbb{R}$.
    
    \item\emph{Coupling-based feedback} assumes that system $C_{\sidx}$ may be divided into two subsystems ${D_1}$ and ${D_2}$ with Hamiltonians $H_{D_1}$ and $H_{D_2}$, respectively, which interact coherently according to some interaction Hamiltonian $H_{D_1 D_2}^{(int)}$ whose strength is tunable, i.e., $H_{C_\sidx}(c) = H_{D_1}\otimes \mathbb{1}_{D_2} +  \mathbb{1}_{D_1} \otimes H_{D_2} + c H_{D_1 D_2}^{(int)}$, where $\parameterspace{\sidx}\subseteq\mathbb{R}$.
\end{itemize}
$\,$\\
Secondly, we assume that the time scale on the order of which changes of the control parameters are implemented is much shorter than the time scale on the order of which the evolution of the clockworks take place, which is known as the sudden approximation \cite[page 739ff.]{messiah1981quantum}.\\

\begin{assumption}
    \label{ass:sudden_approximation}
    A change of the control parameters $\mathbf{c}\rightarrow\mathbf{c}'$ is implemented instantaneously such that the states of all systems in the joint clockwork $C$ before and after the change are unaffected, i.e., for a change of parameters at time $T$, we have $\lim_{t\rightarrow T^-}\rho_{C}(t)=\lim_{t\rightarrow T^+}\rho_{C}(t)$.
\end{assumption}
$\,$\\
Thirdly, we need to ensure that the control unit does not rely on additional timing resources, that are not explicitly accounted for by the joint clockwork $C$.
Therefore, we require that the control unit's choice of the parameters $\mathbf{c}$ is fully determined by the control unit's memory storing information about the sequence of jumps that have occurred in the joint clockwork $C$ so far and about previous decisions of the control unit.\\

\begin{assumption}
    \label{ass:deterministic_updates}
    There exists a deterministic and time-independent function $\memorytorates{}$ choosing the current control parameter $\mathbf{c} = \memorytorates{}(m)$ based on the current state of the control unit's memory $m$. 
    Furthermore, there exists a deterministic and time-independent function $\memoryupdate{}{}{}$ such that the state of the control unit's memory is updated from $m$ to $m'=\memoryupdate{m}{\jidx}{}$ immediately after a jump of type $\jidx$ has been observed in $C$.
\end{assumption}
Note, that under these assumptions a change of control parameters can only occur immediately after a jump in the joint clockwork has been observed. This ensures that the control unit by itself has no notion of the passage of time and that any timing resource used by the control unit (similar to a CPU clock for example) must be included in the joint clockwork $C$. 
Together with \cref{ass:sudden_approximation}, this assumption correspond to the detectors observing the jumps having infinite bandwidth. In this work, we do not investigate scenarios, with a time delay between the observation of the jumps and the execution of the feedback due to finite detector bandwidths and changes of the system dynamics occurring over a finite amount of time. 
Lastly, we assume the size of the control unit's memory to be finite, since any realistic physical system can store only a finite amount of information.\\

\begin{assumption}
    The memory state space $\memoryspace$ is finite, i.e., $|\memoryspace|<\infty$.
\end{assumption}

The assumptions presented so far naturally lead us to \cref{def:feedback_policy} of a feedback policy. 
As explained in the main text, we wish to compare our feedback policies to protocols where the control unit remains passive at all times and chooses only one set of control parameters to operate the clockworks at, instead of switching between different sets of parameters.
Every protocol of this form, can be described as a constant feedback policy.\\

\begin{definition}
    For systems $C_{\sidx}$, satisfying~\cref{ass:Lindblad_control} for $\sidx=1,\ldots,{\numpps}$, a constant feedback policy $\Pi$ is a feedback policy with a trivial memory state space $\memoryspace=\{0\}$.
\end{definition}
Since the memory state space of a constant feedback policy contains only a single state, both its memory-update function and its set of parameter-update functions must be constant. Consequently, the control unit chooses one particular set of control parameters, out of the set of all valid parameters (determined by $\{\parameterspace{\sidx}\}_\sidx$), at which the clockworks $C_{\sidx}$ are operated at, at all times $t$.\\

Alternatively, one could also demand the parameter-update function to be constant and allow for a general memory space and general memory-update function.
There is, however, a subtle difference between these two approaches, which has to do with post-processing the time information:
If one introduces a memory system (even in the case of a constant parameter-update function), one automatically increases the number of jump operators since every jump operator $J_{C_\sidx,\jidx}$ is replaced by $|\memoryspace|$ jump operators of the form $J_{C_\sidx,\jidx}\otimes\ketbra{\memoryupdate{m}{\jidx}{}}{m}_M$.
Thereby, one also enlarges the space of possible integrated currents as one has to consider integrated currents $N(t)=\sum_{\sidx=1}^{\numpps}\sum_{\jidx\in\jumpalphabet{\sidx}}\sum_{m\in\memoryspace}\weights{\jidx,m}^{(\sidx)}N^{(\sidx)}_{\jidx,m}(t)$, where $N^{(\sidx)}_{\jidx,m}(t)$ is the elementary integrated current counting the total number of jumps associated to the jump operator $J_{C_\sidx,\jidx}\otimes\ketbra{\memoryupdate{m}{\jidx}{}}{m}_M$, instead of integrated currents $N(t)=\sum_{\sidx=1}^{\numpps}\sum_{\jidx\in\jumpalphabet{\sidx}}\weights{\jidx}^{(\sidx)}N^{(\sidx)}_{\jidx}(t)$, where $N^{(\sidx)}_{\jidx}(t)$ counts the total number of jumps associated to the jump operator $J_{C_\sidx,\jidx}$.
Intuitively, this is the difference between a time estimator, that adds an increment $\weights{\jidx}^{(\sidx)}$ upon observing a jump of type $\jidx$ in clockwork $C_\sidx$, or a time estimator, that additionally takes into account the memory $m$, of the sequence of jumps that have been observed in all clockworks at previous times, and accordingly adds an increment $\weights{\jidx,m}^{(\sidx)}$.
It is apriori not clear whether the latter time estimator can give better estimates than the former.
For indications that taking into account the memory can be beneficial we refer to \cite{Masterarbeit}, however, answering this question lies beyond the scope of this work.
Therefore, we restrict ourselves to constant feedback policies as in \cref{def:feedback_policy}.\\

Before we conclude this section, let us remark that, by definition, one has to consider a non-trivial memory space in order to construct a non-constant feedback policy.
Hence, to make the comparison between non-constant and constant feedback policies fair, we should ensure that an advantage in terms of the SNR, when implementing a non-constant policy, stems from the feedback mechanism and not from the enhanced capabilities of the time estimator. 
For the non-constant feedback policy for two qubit clockworks that we construct in \cref{app:go_example_qubit_clockwork}, we ensure this by restricting ourselves to an integrated current with equal weights $\weights{\jidx,m}^{(\sidx)}=\weights{\jidx}^{(\sidx)}$ for all $m\in\memoryspace$.\\

\section{Previous models of feedback in ticking (quantum) clocks}\label{app:previous_work}
To our knowledge, Barato and Seifert \cite{Barato_Seifert} and Yang et al.\ \cite{yang2019accuracyenhancingprotocolsquantum} have so far been the only ones investigating the potential of protocols in which the dynamics of ticking clocks are externally controlled.
However, only in \cite{yang2019accuracyenhancingprotocolsquantum} this is done with the explicit goal of improving the quality of the clock's estimate of time. 
In this section, we give a short summary of both approaches and for the case of one of the protocols from~\cite{yang2019accuracyenhancingprotocolsquantum} we explicitly show how it can be seen as a particular instance of a feedback policy. In \cite{Barato_Seifert}, Barato and Seifert examine the entropy production in classical, so-called Brownian ticking clocks.
Note, that Barato and Seifert \cite{Barato_Seifert} describe clocks as continuous-time Markov chains.
In the following, we rephrase their model in the language of \cite{RNH_ticking_clocks}.\\

The clockwork of a Brownian clock consists of $\Ridx+1$ states, which form a closed loop. 
From the $\ridx$\textsuperscript{th} state in the chain the system can jump in two directions, either to the $(\ridx+1)$\textsuperscript{st} or the $(\ridx-1)$\textsuperscript{st} state, where the rates are biased such that a jump in the forward direction, i.e., to the $(\ridx+1)$\textsuperscript{st} state, occurs with high likelihood compared to a backwards jump.
Once the clock has reached the end of the chain, a jump in the forward direction will bring it into the zeroth state of the chain.
Correspondingly, a backwards jump brings the clock from the zeroth state to the end of the chain.
Due to the possibility of backwards jumps, Brownian clocks allow for an analysis of the thermodynamic resources they require.
Furthermore, upon such a jump from the $\Ridx$\textsuperscript{th} to the $0$\textsuperscript{th} state, the clock's estimate of time is increased by one unit of time and for a jump in the reverse direction it is decreased by one unit.
This corresponds to an integrated current, which has positive weight for the jumps from the $\Ridx$\textsuperscript{th} to the $0$\textsuperscript{th} state, negative weight for jumps from the $0$\textsuperscript{th} to the $\Ridx$\textsuperscript{th} state and weight zero else.
For such a Brownian clock, Barato and Seifert \cite{Barato_Seifert} consider what they refer to as an external protocol, where the biases between jumps in the forward and jumps in the backwards direction are changed during the operation of the clock.
In particular, they introduce a second external clock, without backwards jumps, and explore a protocol where the biases of the Brownian clock are updated upon a signal from the external clock.
It is clear that this protocol can be neatly phrased as a feedback policy:
In this case, we have two clockworks, one for the Brownian clock and a second for the external clock, as well as a control system, whose memory records jumps only from the external clock. 
Based on its memory state, the control unit sets the biases of jumps in the forward and in the backwards direction, in the clockwork of the Brownian clock, according to some policy.
Barato and Seifert \cite{Barato_Seifert} demonstrate that such an external protocol can achieve a significant reduction of the entropy production in the Brownian clock, while maintaining the same level of quality of its time estimate.
However, whether such an external protocol improves the best achievable time estimate obtained from a Brownian clock is unclear and we will leave this open to further exploration in the future.\\

Before we begin our summary of the approach taken by Yang et al.~\cite{yang2019accuracyenhancingprotocolsquantum}, note that it uses the concept of autonomous clocks, introduced in \cite{alternate_tick_game}. Since an autonomous clock can be seen as a so-called elementary ticking clock with discretized dynamics, we will rephrase the approach of \cite{yang2019accuracyenhancingprotocolsquantum} in terms of ticking clocks instead. For more details on the connection between autonomous clocks and ticking clocks refer to \cite[Section IV.E.]{RNH_ticking_clocks} or to \cite[Section II.B.]{Woods2022}.\\

Yang et al.\ \cite{yang2019accuracyenhancingprotocolsquantum} consider the case of two clocks operated in parallel, which they call the input clock (IC) and the enhancing clock (EC). 
While the input clock controls the dynamics of the enhancing clock, the enhancing clock produces the time signal that is then passed on as the output of the overall system.
In~\cite{yang2019accuracyenhancingprotocolsquantum}, Yang et al.\ introduce four different protocols. 
We will focus on~\cref{alg:accuracy_enhancing_protocol} which we reproduce below, for convenience. 
Although protocol 2 is very similar to~\cref{alg:accuracy_enhancing_protocol}, it would require more general control over the clock's dynamics than we have assumed in our framework. 
Protocol 3 and 4 can be viewed as post-processing of the time signal and therefore correspond to a particular choice of the integrated current.\\

While Yang et al.\ \cite{yang2019accuracyenhancingprotocolsquantum} only require that the input clock produces a time signal with i.i.d.\ interarrival times between consecutive ticks, let us assume for simplicity that the IC is a ticking clock satisfying \textST{}, \textCWI{} and that only allows for jumps of a single type which we label by $\jidx=0$.
Then, the Lindblad operator of the IC is given by 
\begin{equation}
    \Lindblad{C_I}{}{} = -i[H_{C_I},\bullet] + \dissipator{J_{C_I,0}}{},
\end{equation}
where $C_I$ denotes the Hilbert space of the IC's clockwork.
For the enhancing clock Yang et al.\ \cite{yang2019accuracyenhancingprotocolsquantum} assume that the dynamics can be switched between two modes of operation: a dissipative evolution and a unitary evolution in which the clock is silent and does not produce any ticks. 
This could correspond to the situation where a detector monitors the system.
When the detector is turned off, the system $C_E$ will evolve unitarily according to some Hamiltonian $H_{C_E}$.
However, when the detector is turned on, the clockwork experiences a backaction as part of its continuous evolution due to the continuous measurement.
Additionally, the system will experience jumps that can be identified with clicks of the detector.
Let us assume for simplicity, that the EC only allows for one type of jump that we label by $\jidx=1$ and denote $J_{C_E,1}$.
Then, the Lindblad operator of the enhancing clock is given by
\begin{equation}
    \Lindblad{C_E}{}{} = -i[H_{C_E},\bullet] + \dissipator{J_{C_E,1}(c)}{},
\end{equation}
where $C_E$ denotes the Hilbert space of the EC's clockwork and
\begin{equation}
    \label{eq:switching_on_off}
    J_{C_E,1}(c) = \sqrt{c} J_{C_E,1}  = c J_{C_E,1}
\end{equation}
with $c\in\{0,1\}=\parameterspace{EC}$. 
A control parameter $c=0$ corresponds to the unitary evolution and a control parameter $c=1$ to the dissipative evolution of system $C_E$.
Note, that~\cref{eq:switching_on_off} can be classified as jump-strength-based feedback (see \cref{app:assumptions_feedback_policy}).
With the EC and IC as defined above,~\cref{alg:accuracy_enhancing_protocol} consists in switching between the unitary and dissipative dynamics of the EC.
\begin{algorithm}
    \SetAlgorithmName{Protocol}{}{}
    \caption{see \cite[Protocol 1]{yang2019accuracyenhancingprotocolsquantum}}
    \label{alg:accuracy_enhancing_protocol}
    
    \begin{algorithmic}[1]
        \State Initialize the IC and EC and set $c=0$ (unitary dynamics).
        \Loop
        \State Wait for a tick of type $i=0$ from the IC. 
        \State Switch to $c=1$ (dissipative dynamics).
        \State Wait for a tick of type $i=1$ from the EC.
        \State Switch to $c=0$ (unitary dynamics) and output a tick.
        \EndLoop
    \end{algorithmic}
\end{algorithm}

In order to make the connection with our framework, note that $C_I$ and $C_E$ fulfill~\cref{ass:Lindblad_control}.
Let us define a feedback policy $\Pi$ with memory space $\memoryspace=\{0,1\}$, a parameter-update rule $\memorytorates{E}(m) = m$ for the enhancing clock and a memory-update rule
\begin{equation}
    \memoryupdate{m}{\jidx}{} = 
    \begin{cases}
        0, & \text{if } \jidx=1\\
        1, & \text{if } \jidx=0
    \end{cases}
\end{equation}
where the controller's memory is set to $m=0$ if the last observed tick was from the enhancing clock or to $m=1$ if the last observed tick was from the input clock. Using~\cref{eq:Lindblad_evo_conditioned_on_memory}, the Lindblad operator for the joint system $C_I\otimes C_E\otimes M$ is given by
\begin{equation}
    \begin{split}
        \Lindblad{\mathrm{fb}}{}{}
        = & -i[H_{C_I}\otimes\mathbb{1}_{C_E}\otimes\mathbb{1}_{M},\bullet] + \dissipator{J_{C_I,0}\otimes\mathbb{1}_{C_E}\otimes\ketbra{1}{0}_M}{} + \dissipator{J_{C_I,0}\otimes\mathbb{1}_{C_E}\otimes\ketbra{1}{1}_M}{} -\\
        & -i[H_{C_E}\otimes\mathbb{1}_{C_I}\otimes\mathbb{1}_{M},\bullet] + \dissipator{J_{C_E,1}\otimes\mathbb{1}_{C_I}\otimes\ketbra{0}{1}_M}{},
    \end{split}
\end{equation}
which could have been guessed directly from~\cref{alg:accuracy_enhancing_protocol} as well.
Passing on ticks only from the EC as output ticks can be seen as a specific choice of an integrated current $N(t) = w_0N_0(t) + w_1 N_1(t)$ with weight $\weights{0}=0$, where $N_{\jidx}(t)$ is the elementary integrated current, counting the numbers of jumps of type $\jidx$ that have been observed.
Note, that the description in terms of a feedback policy can easily be adapted to protocols where the IC and EC allow for more than a single type of jump.
Furthermore, we remark that Yang et al.\ use the $\epsilon$-inaccuracy \cite[Definition 1]{yang2019accuracyenhancingprotocolsquantum} as a measure of the quality of the output time signal in order to analyze the influence of their protocols.
Since the $\epsilon$-inaccuracy does not give a bound on $\mathmerit$ or $\accuracyN$, we cannot translate their results to aid our analysis.\\

\section{Proof of \cref{thm:classical_SNR_bound} and \cref{cor:No_go}}
\label{app:proof_no_go}
Before we give the proof of \cref{thm:classical_SNR_bound}, let us introduce some notation that will help us to construct the rate matrix that can be identified with a feedback policy in classical clockworks of arbitrary dimension.\\

For a vector $\cidx\in\{0,1\}^\numpps$, we write $\ket{\cidx}_{C_1\cdots C_\numpps}=\ket{\cidx^{(1)},...,\cidx^{(\numpps)}}_{C_1\cdots C_\numpps}=\ket{\cidx^{(1)}}_{C_1}\otimes\cdots\otimes\ket{\cidx^{(\numpps)}}_{C_\numpps}$ to denote the joint state of all clockworks, where clockwork $C_\sidx$ is in state $\cidx^{(\sidx)}\in\{0,1\}$.
Furthermore, by $\mathbb{1}^{(\sidx)}=(0,\ldots,0,1,0,\ldots,0)^{\mathrm{T}}$ we denote the vector that has a $1$ at position $\sidx$ and zeros everywhere else. Let $\oplus$ denote the element-wise addition modulo two. 
It is easy to see that the state associated with the vector $\cidx$ will change to the state associated with $\cidx\oplus\mathbb{1}^{(\sidx)}$, after a jump occurs in the $\sidx$\textsuperscript{th} classical clockwork of dimension two.
From \cref{app:classical_Markovian_sys}, it is clear, that the rate at which such a jump occurs is equal to the rate $\Gamma^{(\sidx)}_{\cidx^{(\sidx)}}$ appearing in jump operator $J_{C_\sidx,{\cidx^{(\sidx)}}}=\sqrt{\Gamma^{(\sidx)}_{\cidx^{(\sidx)}}} \ketbra{\cidx^{(\sidx)}\oplus1}{\cidx^{(\sidx)}}_{C_\sidx}$, where by abuse of notation $\oplus$ denotes addition modulo two. 
Therefore, in the absence of any feedback policy the dynamics of the system are governed by a classical Markovian Master equation with a rate matrix $\mathrm{L}$, given by
\begin{equation}
    \label{eq:L_no_feedback}
    L_{\altcidx\cidx} 
    = \sum_{\sidx=1}^{\numpps} \Gamma^{(\sidx)}_{\cidx^{(\sidx)}} \mleft(\delta_{\altcidx,\cidx\oplus \mathbb{1}^{(\sidx)}}-\delta_{\altcidx,\cidx}\mright)
    = 
    \begin{cases}
        -\Gamma_{\cidx}, & \text{if } \altcidx=\cidx \\   
        \Gamma^{(\sidx)}_{\cidx^{(\sidx)}}, & \text{if } \altcidx = \cidx\oplus\mathbb{1}^{(\sidx)}\\
        0, & \text{else}
    \end{cases}
\end{equation}
where we defined $\Gamma_{\cidx} \coloneq \sum_{\sidx=1}^{\numpps}\Gamma^{(\sidx)}_{\cidx^{(\sidx)}}$ and where $\delta_{\altcidx,\cidx} = \prod_{\sidx=1}^{\numpps} \delta_{\altcidx^{(\sidx)},\cidx^{(\sidx)}}$ with the Kronecker delta $\delta_{\cidx^{(\sidx)},\altcidx^{(\sidx)}}$.
Let us now repeat this analysis for $\numpps$ classical clockworks of dimension two together with a general feedback policy $\Pi=(\memoryspace,\memoryupdate{}{}{}, \memorytorates{})$.
For $\cidx\in\{0,1\}^{\numpps}$ and $m\in\memoryspace$, we write $\ket{\cidx,m}_{C_1\cdots C_\numpps M}=\ket{\cidx^{(1)}}_{C_1}\otimes\cdots\otimes\ket{\cidx^{(\numpps)}}_{C_\numpps}\otimes\ket{m}_M$. Similarly as to before, after a jump occurs in clockwork $C_\sidx$, the joint state of the clockworks and memory will change from the state associated with $(\cidx,m)$ to the state associated with $(\cidx\oplus\mathbb{1}^{(\sidx)},\memoryupdate{m}{\cidx^{(\sidx)}}{})$, where by abuse of notation we denote by $\cidx^{(\sidx)}$ the index in set $\jumpalphabet{\sidx}$ that is associated with the jump $\cidx^{(\sidx)}\rightarrow\cidx^{(\sidx)}\oplus1$ in clockwork $C_\sidx$.
The rate at which this jump occurs is given by $\memorytorates{\sidx}_{\cidx^{(\sidx)}}(m)$, where we use the subscript $\cidx^{(\sidx)}$ to denote the $(\cidx^{(\sidx)})$\textsuperscript{th} element of the tuple $\memorytorates{\sidx}(m)\in\tilde{\parameterspace{\sidx}}\times\tilde{\parameterspace{\sidx}}$.
Therefore, the dynamics of the joint system are governed by a rate matrix $\mathrm{L}^{\Pi}$ with elements
\begin{equation}
    \label{eq:L_pp_with_feedback}
    L^{\Pi}_{(\altcidx,n)(\cidx,m)} 
    = \sum_{\sidx=1}^{\numpps} \memorytorates{\sidx}_{\cidx^{(\sidx)}}(m) \mleft(\delta_{\altcidx,\cidx\oplus \mathbb{1}^{(\sidx)}} \delta_{n,\memoryupdate{m}{\cidx^{(\sidx)}}{}}-\delta_{\altcidx,\cidx}\delta_{n,m}\mright) =
    \begin{cases}
        - \Gamma_{(\cidx,m)} &, \text{ if } (\altcidx,n)=(\cidx,m) \\
        \memorytorates{\sidx}_{\cidx^{(\sidx)}}(m) & ,\text{ if } (\altcidx,n)=(\cidx\oplus \mathbb{1}^{(\sidx)},\memoryupdate{m}{\cidx^{(\sidx)}}{}) \\
        0 &, \text{ else}
    \end{cases}
\end{equation}
where we defined $\Gamma_{(\cidx,m)} = \sum_{\sidx=1}^{\numpps} \memorytorates{\sidx}_{\cidx^{(\sidx)}}(m)$.
Note, that the sum in both $\Gamma_\cidx$ and $\Gamma_{(\cidx,m)}$ contains $\numpps$ terms. 
From this key observation, the following result can be obtained as a corollary to the kinetic uncertainty relation, discussed in \cref{app:uncertainty_relations}.\\

\classicalSNRbound*
\begin{proof}
    Let $\rho^{\mathrm{ss}}_{C_1\cdots C_\numpps M}$ be the steady state in which the joint system is initialized.
    As discussed in \cref{app:classical_Markovian_sys}, the diagonal density operator is associated with a stationary distribution $\mathbf{p}^{\mathrm{ss}}$, whose entries ${p}^{\mathrm{ss}}_{(\cidx,m)}$ are the entries on the diagonal of $\rho^{\mathrm{ss}}_{C_1\cdots C_\numpps M}$. From our discussion above, it is clear that the rate matrix $\mathrm{L}^\Pi$ defined in \cref{eq:L_pp_with_feedback} governs the distribution's time evolution.
    In this setting, we can apply the kinetic uncertainty relation, $\mathmerit(t)\leq\mathdynact$ for all $t$, 
    where the dynamical activity is given by 
    \begin{equation}
        \mathdynact= \sum_{\cidx\in\{0,1\}^{\numpps}, m\in\memoryspace} \Gamma_{(\cidx,m)} p^{\mathrm{ss}}_{(\cidx,m)}.
    \end{equation}
    Therefore, we immediately obtain
    \begin{equation}
        \mathmerit \leq \max_{\cidx\in\{0,1\}^{\numpps}, m\in\memoryspace}\mleft(\Gamma_{(\cidx,m)}\mright) = \max_{m\in\memoryspace}\mleft(\max_{\cidx\in\{0,1\}^{\numpps}}\sum_{\sidx=1}^{\numpps}\memorytorates{\sidx}_{\cidx^{(\sidx)}}(m)\mright).
    \end{equation}
    The maximum is guaranteed to exist since the memory space $\memoryspace$ is finite, by definition.
\end{proof}
$\,$\\
By constructing an explicit constant feedback policy and integrated current that saturate \cref{eq:maximal_SNR} we obtain the following result.\\

\nogo*
\begin{proof}
    For the feedback policy $\Pi=(\memoryspace,\memoryupdate{}{}{},\memorytorates{})$, let $m^*\in\memoryspace$ be the memory state and $(i_1^*,\ldots,i_\numpps^*)$ the bit-string such that $\sum_{a=1}^\numpps\gamma_{i_\sidx^*}^{(\sidx)}(m^*)$ achieves the maximum of the right-hand side of \cref{eq:maximal_SNR}. 
    Consider a constant feedback policy $\Pi'=(\memoryspace',\memoryupdate{}{}{}',\memorytorates{}')$ with trivial memory space $\memoryspace'=\{0\}$, trivial memory update function $\memoryupdate{}{}{}'$ and parameter update functions $\memorytorates{}'^{(\sidx)}(0)=(\memorytorates{\sidx}_{i_\sidx^*}(m^*),\memorytorates{\sidx}_{i_\sidx^*}(m^*))^\mathrm{T}$. 
    Under this feedback policy, for all states $\cidx\in\{0,1\}^{\numpps}$ the rates of escape are equal to $\sum_{a=1}^\numpps\gamma_{i_\sidx^*}^{(\sidx)}(m^*)$, such that $\mathdynact=\mathrestim^{-1}=\sum_{a=1}^\numpps\gamma_{i_\sidx^*}^{(\sidx)}(m^*)$.
    Furthermore, every state $\cidx$ can be reached from every other state $\altcidx\neq\cidx$ in $\numpps$ jumps or less, which implies uniqueness of the stationary distribution \cite[Theorem~2.2~(i)]{CTMC_review}.
    Consider an integrated current that counts the total number of jumps that have occurred in all clockworks of dimension two up to time $t$, i.e., $N'(t)=\sum_{\sidx=1}^\numpps \mleft( N_{0}^{(\sidx)}(t)+N_{1}^{(\sidx)}(t)\mright)$, where the elementary integrated current $N^{(\sidx)}_\jidx(t)$ counts the number of jumps of type $\jidx$ in the $\sidx$\textsuperscript{th} classical clockwork of dimension two.
    Up to an irrelevant multiplicative factor $N'(t)$ is equivalent to the hyperaccurate current $N_{BLUE}(t)$, discussed in \cref{app:uncertainty_relations}, and hence satisfies \cref{eq:maximal_SNR} with equality.
\end{proof}

\begin{remark}
    The results presented in this section can easily be generalized to classical clockworks of arbitrary and different dimension, by replacing the requirement $\cidx^{(\sidx)}\in\{0,1\}$ with the requirement $\cidx^{(\sidx)}\in\{0,1,\ldots,d^{(\sidx)}-1\}$, where $d^{(\sidx)}$ is the dimension of classical clockwork $C_\sidx$, and by replacing addition modulo two by addition modulo $d^{(\sidx)}$.
    Thus, the only limitation of our proof comes from the demand of a symmetric parameter space $\parameterspace{\sidx}=(\tilde{\parameterspace{\sidx}})^{d^{(\sidx)}}$.
    We suspect that this limitation can be lifted by using the clock uncertainty relation to determine a tighter bound on the performance of a general feedback policy than that in \cref{thm:classical_SNR_bound}, but we have been unable to prove this.
\end{remark}

\section{Signal-to-noise ratio of the qubit clockwork}
\label{app:SNR_single_qubit_clockwork}
In this section, we determine the signal-to-noise ratio of a single qubit clockwork. 
Recall the qubit clockwork's Lindblad operator
\begin{equation}
    \Lindblad{}{}{E,\phi}= i\frac{E}{2}[\sigma^z,\bullet] + \Gamma \ketbra{\phi}{+}\bullet \ketbra{+}{\phi} - \frac{1}{2} \{\Gamma\ketbra{+}{+},\bullet\},
\end{equation}
where $\sigma^z$ denotes the Pauli z-matrix and where we neglected the subscripts $C_\sidx$ and the superscripts $(\sidx)$ in the control parameters $E,\phi$ for readability.
As there exists only a single jump operator $J=\sqrt{\Gamma}\ketbra{\phi}{+}$, the integrated current is fully determined by the choice of the weight $\weights{}$.
Since the signal-to-noise ratio is invariant under rescaling of the weights, according to \cref{cor:rescaling_N_and_S}, without loss of generality we set $\weights{}=1$.
Then, using the tools of full counting statistics, we can calculate the average current and noise assuming that the system is initialized in a steady state $\rho^{ss}$.
Let us define the ratio between the energy $E$ and the measurement strength $\Gamma$ by $\Delta=(E/\Gamma)$.
We evaluate \crefrange{eq:average_current}{eq:noise_term_2} using Mathematica \cite{Mathematica,Github}, which gives
\begin{equation}
    \frac{F}{\Gamma} = \frac{2  \Delta^2}{(1 + 4 \Delta^2) - \sqrt{1+4\Delta^2}\mathrm{cos}(\phi-\alpha)},
\end{equation}
where $\alpha = \mathrm{arccos}\mleft(\frac{1}{\sqrt{1+4\Delta^2}}\mright)$, and 
\begin{equation}
    \frac{D}{\Gamma} = \Delta^2 \frac{ (5 - 4 \Delta^2 + 32 \Delta^4) - (1+4\Delta^2)\mathrm{cos}{(2(\phi-\alpha))}
    + 4 \sqrt{1 - 4\Delta^2 + 16\Delta^4}\; \mathrm{cos}{(\phi+\beta)}}{((1 + 4 \Delta^2) - \sqrt{1+4\Delta^2}\mathrm{cos}(\phi-\alpha))^3},
\end{equation}
where $\beta=\arccos\mleft(\frac{-1+4\Delta^2}{\sqrt{1 - 4\Delta^2 + 16 \Delta^4}}\mright)$.
Maximizing $\mathmerit/\Gamma = (F/\Gamma)^2/(D/\Gamma)$ in Mathematica \cite{Mathematica,Github}, we find that the maximum
\begin{equation}
    \frac{\mathmerit^*}{\Gamma} \approx 1.19
\end{equation} is achieved for $\Delta^*\approx0.84$ and $\phi^*\approx1.15\pi$. 
Relating $\Delta^*$ to an energy, we obtain the energy $E^*\coloneq\Delta^*\Gamma$ maximizing the SNR.
In \cref{fig:SNR_Mathematica}, we plot the analytical solution for $\mathmerit/\Gamma$ versus $\Delta$ and $\phi$ in Mathematica \cite{Mathematica,Github}. 
In \cref{fig:SNR_qubit_clockwork_Python}, we plot the signal-to-noise ratio versus $\Delta$ and $\phi$ via a different method: We evaluate each data point by constructing the Hamiltonian and jump operators numerically and solving \crefrange{eq:average_current}{eq:noise_term_2} numerically using Python \cite{Python,Github}.
Here, the parameters $\Delta^*,\phi^*$ that achieve the maximum of the SNR $\mathmerit^*$ are marked by the red star.
Comparing \cref{fig:SNR_qubit_clockwork_Python} and \cref{fig:SNR_Mathematica}, we observe that the numerical simulation agrees nicely with the analytical solution.\\

\begin{figure}
    \centering
    \includegraphics[width=0.5\linewidth]{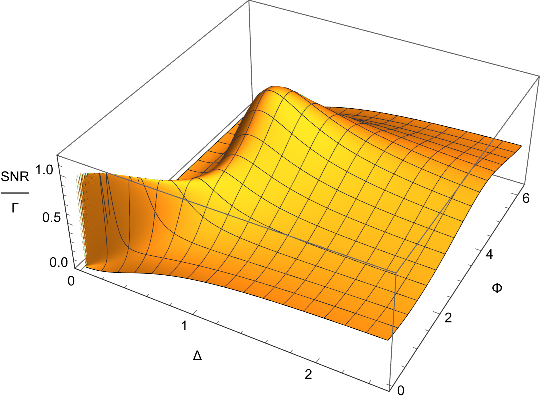}
    \caption{We evaluate and plot the analytical expression for the signal-to-noise ratio $\mathmerit/\Gamma$ using Mathematica \cite{Mathematica,Github}.}
    \label{fig:SNR_Mathematica}
\end{figure}

\section{Constructing the best performing constant feedback policy}\label{app:proof_constant_feedback_policies_result}
In \cref{app:constant_feedback_policies:statement_and_proof}, we show that the best performing constant feedback policy can be constructed by optimizing the performance of each clockwork, separately. 
In \cref{app:constant_feedback_policies:technical_proposition}, we provide a technical result that is used in this proof.

\subsection{A technical intermediate step \label{app:constant_feedback_policies:technical_proposition}}
Consider a Markovian quantum system consisting of two independent Markovian subsystems, $C_1$ and $C_2$.
For an integrated current $N(t)$ that is the sum of an integrated current $N^{(1)}(t)$ and $N^{(2)}(t)$
for systems $C_1$ and $C_2$, respectively, one would expect the expectation value (variance) of $N(t)$ to be the sum of the expectation values (variances) of $N^{(1)}(t)$ and $N^{(2)}(t)$.
The proposition below asserts that this is true. 
\begin{proposition}
    \label{prop:additivity_F_D_independent_subsys}
    Consider a Markovian quantum system consisting of two independent Markovian subsystems, $C_1$ and $C_2$, governed by a Lindblad operator $\Lindblad{C_1 C_2}{}{} = (\mathcal{L}_{C_1}\otimes\mathcal{I}_{C_2})[\bullet]+(\mathcal{I}_{C_1}\otimes\mathcal{L}_{C_2})[\bullet]$, where $\mathcal{I}_{C_\sidx}$ denotes the identity channel and $\Lindblad{C_a}{}{} = -i[H_{C_\sidx},\bullet]+\sum_{\jidx\in\jumpalphabet{\sidx}}\dissipator{J_{C_\sidx,\jidx}}{}$ the Lindblad operator of subsystem $C_\sidx$. 
    Consider an integrated current $N(t) = \sum_{\jidx_1\in\jumpalphabet{1}}\weights{\jidx_1}^{(1)} N_{\jidx_1}^{(1)}(t) + \sum_{\jidx_2\in\jumpalphabet{2}}\weights{\jidx_2}^{(2)} N_{\jidx_2}^{(2)}(t)$, where $N_{\jidx}^{(\sidx)}(t)$ denotes the elementary integrated current counting jumps of type $\jidx$ in subsystem $C_\sidx$.
    If the total system is initialized in a steady state $\rho^{ss}_{C_1 C_2}=\rho^{(1)}_{C_1}\otimes\rho^{(2)}_{C_2}$, the average current $F$ and noise $D$ of the integrated current $N(t)$ are given by 
    \begin{equation*}
        F = F^{(1)} + F^{(2)} \quad\text{and}\quad D = D^{(1)} + D^{(2)},
    \end{equation*}
    where $F^{(\sidx)}$ and $D^{(\sidx)}$ are the average integrated current and noise for a Markovian quantum system with Lindblad operator $\Lindblad{C_\sidx}{}{}$ and an integrated current $N^{(\sidx)}(t) = \sum_{\jidx\in\jumpalphabet{\sidx}}\weights{\jidx}^{(\sidx)}N_{\jidx}^{(\sidx)}(t)$.    
\end{proposition}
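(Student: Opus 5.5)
I will work directly with the full-counting-statistics expressions \crefrange{eq:average_current}{eq:noise_term_2} in vectorized form, exploiting the tensor-product structure. Order the vectorization so that $\vecket{A\otimes B}$ is identified with $\vecket{A}\otimes\vecket{B}$; this is a fixed permutation, under which all formulas are invariant. With this identification,
\begin{equation*}
\matLindblad{C_1C_2}=\matLindblad{C_1}\otimes\mathbb{1}+\mathbb{1}\otimes\matLindblad{C_2},\qquad
\vecket{\rho^{ss}}=\vecket{\rho^{(1)}}\otimes\vecket{\rho^{(2)}},\qquad
\vecbra{\mathbb{1}}=\vecbra{\mathbb{1}_1}\otimes\vecbra{\mathbb{1}_2}.
\end{equation*}
The jump superoperators of the joint system are $\mathcal{J}^{(1)}\otimes\mathbb{1}$ and $\mathbb{1}\otimes\mathcal{J}^{(2)}$. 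Here $\mathcal{J}^{(\sidx)}\coloneq\sum_{\jidx}\weights{\jidx}^{(\sidx)}J^*_{C_\sidx,\jidx}\otimes J_{C_\sidx,\jidx}$, with an analogous $\mathcal{J}^{(\sidx)}_2$ carrying the squared weights.

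\textbf{Current and $D_1$.} For $F$, split the sum over jumps into the $C_1$ and $C_2$ parts. Each term factorizes, for example $\vecbra{\mathbb{1}_1}\mathcal{J}^{(1)}\vecket{\rho^{(1)}}\cdot\vecbraket{\mathbb{1}_2}{\rho^{(2)}}$. Since $\trace[\rho^{(2)}]=1$, this gives $F=F^{(1)}+F^{(2)}$, and the same argument gives $D_1=D_1^{(1)}+D_1^{(2)}$.

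\textbf{$D_2$.} I will use the finite-time integral definition of $\matLindblad{}{^+(t)}$ rather than the group inverse. This avoids any assumption that the joint steady state is unique.

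The propagator factorizes: $\mathrm{e}^{\matLindblad{C_1C_2}\tau}=\mathrm{e}^{\matLindblad{C_1}\tau}\otimes\mathrm{e}^{\matLindblad{C_2}\tau}$. Substituting this, $D_2$ splits into four terms indexed by the pair of subsystems $(\sidx,\sidxprime)$ on the left and right.

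The diagonal term $(1,1)$ reduces to the $C_1$ expression. In it, $\mathrm{e}^{\matLindblad{C_2}\tau}$ acts on the steady state $\rho^{(2)}$ and gives $\rho^{(2)}$. The projector part $\vecketbra{\rho^{ss}}{\mathbb{1}}$ factorizes into $\vecketbra{\rho^{(1)}}{\mathbb{1}_1}\otimes\vecketbra{\rho^{(2)}}{\mathbb{1}_2}$. Tracing over $C_2$ then leaves exactly $D_2^{(1)}$. The term $(2,2)$ gives $D_2^{(2)}$ in the same way.

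\textbf{Cross terms.} The main obstacle is showing that the cross terms vanish. Consider $(1,2)$; the integrand is
\begin{equation*}
\vecbra{\mathbb{1}}(\mathcal{J}^{(1)}\otimes\mathbb{1})\bigl(\mathrm{e}^{\matLindblad{C_1}\tau}\otimes\mathrm{e}^{\matLindblad{C_2}\tau}-\vecketbra{\rho^{ss}}{\mathbb{1}}\bigr)(\mathbb{1}\otimes\mathcal{J}^{(2)})\vecket{\rho^{ss}}.
\end{equation*}
Both pieces of this integrand factorize.

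\emph{Propagator piece.} This equals $\vecbra{\mathbb{1}_1}\mathcal{J}^{(1)}\mathrm{e}^{\matLindblad{C_1}\tau}\vecket{\rho^{(1)}}\cdot\vecbra{\mathbb{1}_2}\mathrm{e}^{\matLindblad{C_2}\tau}\mathcal{J}^{(2)}\vecket{\rho^{(2)}}$. Since $\rho^{(1)}$ is stationary and $\vecbra{\mathbb{1}_2}\mathrm{e}^{\matLindblad{C_2}\tau}=\vecbra{\mathbb{1}_2}$ by trace preservation, this is $F^{(1)}F^{(2)}$.

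\emph{Projector piece.} This equals $\vecbra{\mathbb{1}}(\mathcal{J}^{(1)}\otimes\mathbb{1})\vecket{\rho^{ss}}\cdot\vecbra{\mathbb{1}}(\mathbb{1}\otimes\mathcal{J}^{(2)})\vecket{\rho^{ss}}$, which is also $F^{(1)}F^{(2)}$.

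The integrand is therefore identically zero for every $\tau$, so the cross term vanishes for all $t$ and in the limit. The term $(2,1)$ vanishes symmetrically.

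Collecting terms gives $D=D_1-2D_2=D^{(1)}+D^{(2)}$. The delicate points I will check are the vectorization-ordering identification, and that the single-subsystem $D^{(\sidx)}$ is defined with the same finite-$t$ limit. If a subsystem's steady state is not unique, I will adopt its own $\rho^{(\sidx)}$ in its projector, which is consistent with the above.
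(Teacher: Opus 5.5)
Your proposal is correct and follows essentially the same route as the paper: factorize the joint propagator as $\mathrm{e}^{\mathcal{L}_{C_1}\tau}\otimes\mathrm{e}^{\mathcal{L}_{C_2}\tau}$, reduce the diagonal terms to the single-subsystem expressions, and cancel the $F^{(1)}F^{(2)}$ cross contributions against the matching part of the subtracted $F^2$ term. You do this pair by pair in vectorized form at finite $t$, while the paper does it in one step in superoperator-trace form at $t=\infty$. The one step you use without justification is that each marginal $\rho^{(\sidx)}_{C_\sidx}$ is itself stationary under $\mathcal{L}_{C_\sidx}$. The paper proves this first, by taking the partial trace of $\mathcal{L}_{C_1C_2}[\rho^{(1)}_{C_1}\otimes\rho^{(2)}_{C_2}]=0$ and using $\trace\circ\mathcal{L}_{C_\sidx}=0$, and you should add that one-line argument.
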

\begin{proof}
    Since $\rho_{C_1C_1}^{ss}$ is a steady state of the joint system, we have
    \begin{equation}
        0 = \trace_{C_1}\mleft[\Lindblad{C_1 C_2}{\rho_{C_1 C_2}^{ss}}{}\mright] = \mleft(\trace_{C_1}\circ\Lindblad{C_1}{\rho^{(1)}_{C_1}}{}\mright)\otimes\rho_{C_2}^{(2)} + \trace_{C_1}\mleft[\rho_{C_1}^{(1)}\mright]\otimes\Lindblad{C_2}{\rho_{C_2}^{(2)}}{}
        = \Lindblad{C_2}{\rho_{C_2}^{(2)}}{},
    \end{equation}
    and similarly $\Lindblad{C_1}{\rho_{C_1}^{(1)}}{}=0$, 
    where we used the normalization of $\rho_{C_\sidx}^{(\sidx)}$ and the fact that $\trace_{C_\sidx}\circ\Lindblad{C_\sidx}{}{}=0$ due to the cyclicity of the trace. 
    Define channels $\mathcal{E}_{C_\sidx}[\bullet]\coloneq\sum_{\jidx\in\jumpalphabet{\sidx}}\weights{\jidx}^{(\sidx)} J_{C_\sidx,\jidx}\bullet J_{C_\sidx,\jidx}^\dagger$ and $\mathcal{F}_{C_\sidx}[\bullet]\coloneq\sum_{\jidx\in\jumpalphabet{\sidx}}\mleft(\weights{\jidx}^{(\sidx)}\mright)^2 J_{C_\sidx,\jidx}\bullet J_{C_\sidx,\jidx}^\dagger$.
    Using \crefrange{eq:average_current}{eq:noise_term_2}, we find 
    \begin{equation}
        \label{eq:average_current_C_1_C_2}
        \begin{split}
        F &= \trace\mleft[(\mathcal{E}_{C_1}\otimes\mathcal{I}_{C_2}+\mathcal{I}_{C_1}\otimes\mathcal{E}_{C_2})[\rho^{ss}_{C_1 C_2}]\mright]
        \\ &= \trace\mleft[\mathcal{E}_{C_1}\mleft[\rho^{(1)}_{C_1}\mright]\mright]+\trace\mleft[\mathcal{E}_{C_2}\mleft[\rho^{(2)}_{C_2}\mright]\mright] 
        \\ &=F^{(1)} + F^{(2)}
        \end{split}
    \end{equation}
    where we identified the average currents $F^{(\sidx)}=\vecbra{\mathbb{1}}\sum_{\jidx\in\jumpalphabet{\sidx}}\weights{\jidx}^{(\sidx)}  \mleft(J_{C_\sidx,\jidx}^*\otimes J_{C_\sidx,\jidx}\mright)\vecket{\rho^{(\sidx)}_{C_\sidx}}$ of a Markovian system, governed by the Lindblad operator $\Lindblad{C_\sidx}{}{}$ and initialized in the steady state $\rho^{(\sidx)}_{C_\sidx}$, and of an integrated current determined by the weights $\weights{\jidx}^{(\sidx)}$ for $\jidx\in\jumpalphabet{\sidx}$.
    Analogously, we find 
    \begin{equation}
        \begin{split}
            D_1
            & = \trace\mleft[(\mathcal{F}_{C_1}\otimes\mathcal{I}_{C_2}+\mathcal{I}_{C_1}\otimes\mathcal{F}_{C_2})[\rho^{ss}_{C_1 C_2}]\mright] \\
            & = \trace\mleft[\mathcal{F}_{C_1}\mleft[\rho^{(1)}_{C_1}\mright]\mright]+\trace\mleft[
            \mathcal{F}_{C_2}\mleft[\rho^{(2)}_{C_2}\mright]\mright] \\
            & = D_1^{(1)} + D_1^{(2)},
        \end{split}
    \end{equation}
    where we identified the first term $D_1^{(\sidx)}=\vecbra{\mathbb{1}}\sum_{\jidx\in\jumpalphabet{\sidx}}\mleft(\weights{\jidx}^{(\sidx)}\mright)^2  \mleft(J_{C_\sidx,\jidx}^*\otimes J_{C_\sidx,\jidx}\mright)\vecket{\rho^{(\sidx)}_{C_\sidx}}$ of the noise $D^{(\sidx)}$.
    According to \cref{eq:noise_term_2}, for the second term $D_2$ of the noise, we have 
    \begin{equation}
        \label{eq:noise_term_2_ind_subsys}
        \begin{split}
         D_2 = - \int_0^\infty \odif{\tau} \biggl( &
         \trace\mleft[ \mleft(              \mathcal{E}_{C_1}\otimes\mathcal{I}_{C_2}+
            \mathcal{I}_{C_1}\otimes\mathcal{E}_{C_2}\mright)\circ \mathrm{e}^{\mathcal{L}_{C_1 C_2}\tau}\circ\mleft( \mathcal{E}_{C_1}\otimes\mathcal{I}_{C_2}+
            \mathcal{I}_{C_1}\otimes\mathcal{E}_{C_2}\mright)[\rho^{ss}_{C_1 C_2}] 
         \mright] \\
         &\quad - \trace\mleft[
            \mleft( \mathcal{E}_{C_1}\otimes\mathcal{I}_{C_2}+
            \mathcal{I}_{C_1}\otimes\mathcal{E}_{C_2}\mright)[\rho^{ss}_{C_1 C_2}] 
         \mright]^2
         \biggr).
        \end{split}
    \end{equation}
    Utilizing the fact that $\mathrm{e}^{\mathcal{L}_{C_1 C_2}\tau}[\bullet] = \left(\mathrm{e}^{\mathcal{L}_{C_1}\tau}\otimes\mathrm{e}^{\mathcal{L}_{C_2}\tau}\right)[\bullet]$, for an operator of the form $\Lindblad{C_1 C_2}{}{}=(\mathcal{L}_{C_1}\otimes\mathcal{I}_{C_2})[\bullet]+(\mathcal{I}_{C_1}\otimes\mathcal{L}_{C_2})[\bullet]$, we find
    \begin{equation}
        \label{eq:noise_part_2_C_1_C_2}
        \begin{split}
        &\trace\mleft[ \mleft(              \mathcal{E}_{C_1}\otimes\mathcal{I}_{C_2}+
            \mathcal{I}_{C_1}\otimes\mathcal{E}_{C_2}\mright)\circ \mathrm{e}^{\mathcal{L}_{C_1 C_2}\tau}\circ\mleft( \mathcal{E}_{C_1}\otimes\mathcal{I}_{C_2}+
            \mathcal{I}_{C_1}\otimes\mathcal{E}_{C_2}\mright)[\rho^{ss}_{C_1 C_2}] 
         \mright] = \\
         &\quad \begin{split}
         =&\trace\Bigl[ \Bigl(\mleft(\mathcal{E}_{C_1}\circ\mathrm{e}^{\mathcal{L}_{C_1}\tau}\circ\mathcal{E}_{C_1}\mright)\otimes \mathrm{e}^{\mathcal{L}_{C_2}\tau}  +  \mathrm{e}^{\mathcal{L}_{C_1}\tau} \otimes \mleft(\mathcal{E}_{C_2}\circ\mathrm{e}^{\mathcal{L}_{C_2}\tau}\circ\mathcal{E}_{C_2}\mright) + \\ & \quad+\mleft(\mathcal{E}_{C_1}\circ\mathrm{e}^{\mathcal{L}_{C_1}\tau}\mright)\otimes \mleft(\mathrm{e}^{\mathcal{L}_{C_2}\tau}\circ\mathcal{E}_{C_2}\mright) + \mleft(\mathrm{e}^{\mathcal{L}_{C_1}\tau}\circ\mathcal{E}_{C_1}\mright)\otimes \mleft(\mathcal{E}_{C_2}\circ\mathrm{e}^{\mathcal{L}_{C_2}\tau}\mright)\Bigr)[\rho_{C_1 C_2}^{ss}]\Bigr]
         \end{split}\\
         &\quad 
         =\trace\mleft[ \mleft(\mathcal{E}_{C_1}\circ\mathrm{e}^{\mathcal{L}_{C_1}\tau}\circ\mathcal{E}_{C_1}\mright)\mleft[\rho_{C_1}^{(1)}\mright] \mright] + 
         \trace\mleft[ \mleft(\mathcal{E}_{C_2}\circ\mathrm{e}^{\mathcal{L}_{C_2}\tau}\circ\mathcal{E}_{C_2}\mright)\mleft[\rho_{C_2}^{(2)}\mright] \mright]+2\trace\mleft[\mathcal{E}_{C_1}\mleft[\rho_{C_1}^{(1)}\mright] \mright] \trace\mleft[\mathcal{E}_{C_2}\mleft[\rho_{C_2}^{(2)}\mright] \mright],
         \end{split}
    \end{equation}
    where we used that $\mleft(\trace_{C_\sidx}\circ\mathrm{e}^{\mathcal{L}_{C_\sidx}\tau}\mright)[\bullet]=\trace_{C_\sidx}[\bullet]$, since $\trace_{C_\sidx}\circ\Lindblad{C_\sidx}{}{}=0$, and that $\mathrm{e}^{\mathcal{L}_{C_\sidx}\tau}\mleft[\rho^{(\sidx)}_{C_\sidx}\mright]=\rho^{(\sidx)}_{C_\sidx}$, since $\rho^{(\sidx)}_{C_\sidx}$ is a steady state of $\Lindblad{C_\sidx}{}{}$, in the second equality.
    Plugging \cref{eq:average_current_C_1_C_2,eq:noise_part_2_C_1_C_2} into \cref{eq:noise_term_2_ind_subsys}, we find
    \begin{equation}
        \begin{split}
         D_2 = - \int_0^\infty \odif{\tau} \Bigl( &
         \trace\mleft[ \mleft(\mathcal{E}_{C_1}\circ \mathrm{e}^{\mathcal{L}_{C_1}\tau}\circ\mathcal{E}_{C_1}\mright)\mleft[\rho^{(1)}_{C_1}\mright] 
         \mright] - \trace\mleft[\mathcal{E}_{C_1}\mleft[\rho^{(1)}_{C_1}\mright] 
         \mright]^2 + \\
         &\quad + \trace\mleft[ \mleft(\mathcal{E}_{C_2}\circ \mathrm{e}^{\mathcal{L}_{C_2}\tau}\circ\mathcal{E}_{C_2}\mright)\mleft[\rho^{(2)}_{C_2}\mright] 
         \mright] - \trace\mleft[\mathcal{E}_{C_2}\mleft[\rho^{(2)}_{C_2}\mright] 
         \mright]^2
         \Bigr).
        \end{split}
    \end{equation}
    Identifying $D_2^{(\sidx)}= -
      \vecbra{\mathbb{1}} \sum_{\jidx\in\jumpalphabet{\sidx}} \weights{\jidx}^{(\sidx)} \mleft(J_{C_\sidx,\jidx}^*\otimes J_{C_\sidx,\jidx}\mright) \mleft(\int_{0}^t \odif{\tau}\mleft(\mathrm{e}^{\matLindblad{C_\sidx}{\tau}} - \vecketbra{\rho^{(\sidx)}_{C_\sidx}}{\mathbb{1}}\mright) \mright)
     \sum_{\altjidx\in\jumpalphabet{\sidx}} \weights{\altjidx}^{(\sidx)} \mleft(J_{C_\sidx,\altjidx}^*\otimes J_{C_\sidx,\altjidx}\mright) \vecket{\rho^{(\sidx)}_{C_\sidx}}$ as the second term of the noise $D^{(\sidx)}$, together with the fact that $D=D_1-2D_2$, concludes the proof.
\end{proof}

\subsection{The construction \label{app:constant_feedback_policies:statement_and_proof}}

Let begin by fixing a constant feedback policy. From the following result, we will be able to determine the highest signal-to-noise ratio that any integrated current can achieve for this policy.
\begin{lemma}
    \label{lem:limits_constant_feedback_policy}
    Consider a constant feedback policy $\Pi$ for clockworks $C_{\sidx}$ for $\sidx=1,\ldots,\numpps$, that are initialized in a steady state $\rho^{ss}_{C_1\cdots C_{\numpps}}=\rho^{(1)}_{C_1}\otimes\cdots\otimes\rho^{(\numpps)}_{C_{\numpps}}$.
    For an integrated current $N(t)$ the signal-to-noise ratio satisfies
    \begin{equation*}
        \mathmerit \leq \sum_{\sidx=1}^{\numpps} \mathmerit^{(\sidx)},
    \end{equation*}
    where $\mathmerit^{(\sidx)}$ is the signal-to-noise ratio of the integrated current $N^{(\sidx)}(t)$, which is obtained from the decomposition $N(t) = \sum_{\sidx=1}^{\numpps}N^{(\sidx)}(t)$, where $N^{(\sidx)}(t)$ only counts jumps in clockwork $C_\sidx$.
    Furthermore, there exists a choice $v^{(\sidx)}\in\mathbb{R}$ for $\sidx=1,\ldots,\numpps$ such that the integrated current $N'(t)=\sum_{\sidx=1}^{\numpps}v^{(\sidx)}N^{(\sidx)}(t)$ saturates this inequality.
\end{lemma}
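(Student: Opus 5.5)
The plan is to reduce everything to \cref{prop:additivity_F_D_independent_subsys} and then to an elementary Cauchy--Schwarz inequality. For a constant feedback policy the joint Lindblad operator is $\sum_{\sidx}\mathcal{I}\otimes\cdots\otimes\mathcal{L}_{C_\sidx}(\memorytorates{\sidx}(0))\otimes\cdots\otimes\mathcal{I}$. The trivial memory factor can be dropped, since $\ket{0}_M$ never changes. So the clockworks are independent subsystems, and the product steady state $\rho^{(1)}_{C_1}\otimes\cdots\otimes\rho^{(\numpps)}_{C_\numpps}$ is assumed. By induction on $\numpps$, grouping $C_1\cdots C_{\numpps-1}$ as one subsystem and applying \cref{prop:additivity_F_D_independent_subsys} repeatedly, we get
\begin{equation*}
F=\sum_{\sidx=1}^{\numpps}F^{(\sidx)},\qquad D=\sum_{\sidx=1}^{\numpps}D^{(\sidx)},
\end{equation*}
where $F^{(\sidx)},D^{(\sidx)}$ belong to $N^{(\sidx)}(t)$ evaluated in clockwork $C_\sidx$ alone. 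The proof of \cref{prop:additivity_F_D_independent_subsys} also shows that each $\rho^{(\sidx)}_{C_\sidx}$ is a steady state of $\mathcal{L}_{C_\sidx}$, so the single-clockwork quantities are well defined.

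For the inequality, I would assume $D^{(\sidx)}>0$ for all $\sidx$. If some $D^{(\sidx)}=0$ with $F^{(\sidx)}\neq0$, then $\mathmerit^{(\sidx)}=\infty$ and the bound is trivial. If $D^{(\sidx)}=0$ and $F^{(\sidx)}=0$, that term contributes nothing to either $F$ or $D$ and can be dropped. With $D^{(\sidx)}>0$, Cauchy--Schwarz gives
\begin{equation*}
F^2=\Bigl(\sum_\sidx \frac{F^{(\sidx)}}{\sqrt{D^{(\sidx)}}}\sqrt{D^{(\sidx)}}\Bigr)^2\le\Bigl(\sum_\sidx\frac{(F^{(\sidx)})^2}{D^{(\sidx)}}\Bigr)\Bigl(\sum_\sidx D^{(\sidx)}\Bigr),
\end{equation*}
that is, $\mathmerit=F^2/D\le\sum_\sidx\mathmerit^{(\sidx)}$.

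For saturation, rescaling $N^{(\sidx)}$ by $v^{(\sidx)}$ replaces $F^{(\sidx)}$ by $v^{(\sidx)}F^{(\sidx)}$ and $D^{(\sidx)}$ by $(v^{(\sidx)})^2D^{(\sidx)}$, by \cref{lem:rescaling_F_and_D} with $\alpha=1/v^{(\sidx)}$. The SNR of each part is therefore unchanged. Equality in Cauchy--Schwarz requires $v^{(\sidx)}\sqrt{D^{(\sidx)}}\propto v^{(\sidx)}F^{(\sidx)}/\sqrt{D^{(\sidx)}}$, so I would choose $v^{(\sidx)}=F^{(\sidx)}/D^{(\sidx)}$, the inverse-variance weighting. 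With this choice $F'=\sum_\sidx (F^{(\sidx)})^2/D^{(\sidx)}=D'$, and hence $\mathmerit'=F'=\sum_\sidx\mathmerit^{(\sidx)}$. This step needs $v^{(\sidx)}\in\mathbb{R}$ and a nonzero total, which holds whenever some $F^{(\sidx)}\neq0$. If all $F^{(\sidx)}$ vanish, both sides are zero.

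I expect the main obstacle to be the bookkeeping of degenerate cases: vanishing noise, and the well-definedness of $\matLindblad{}{^+}$ when the individual steady states are not unique. The formulas from \cref{app:FCS_for_clocks} assume a unique attractive steady state, so I would either state that as a standing assumption or note that \cref{prop:additivity_F_D_independent_subsys} is applied in the same setting. The induction itself is routine, since the identity $\mathrm{e}^{\mathcal{L}\tau}=\bigotimes_\sidx\mathrm{e}^{\mathcal{L}_{C_\sidx}\tau}$ generalizes directly.
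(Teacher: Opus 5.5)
Your proof is correct, but it proves the core inequality by a different route than the paper.

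\textbf{The paper's route.} It first reduces to $\numpps=2$ and studies the one-parameter family $N^{(1)}+rN^{(2)}$, with
$\mathmerit'(r)=(F^{(1)}+rF^{(2)})^2/(D^{(1)}+r^2D^{(2)})$.
It finds the critical points $r_{\min}=-F^{(1)}/F^{(2)}$ and $r_{\max}=F^{(2)}D^{(1)}/(F^{(1)}D^{(2)})$. It then shows $r_{\max}$ is the global maximum using a second-derivative test and the asymptote $\mathmerit'\to\mathmerit^{(2)}$ as $r\to\pm\infty$. Finally, it treats $\numpps>2$ by iterating bipartitions, once for the bound and once for achievability.

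\textbf{Your route.} You apply additivity to all $\numpps$ blocks at once and use a single Cauchy--Schwarz inequality. The equality condition directly gives the inverse-variance weights $v^{(\sidx)}=F^{(\sidx)}/D^{(\sidx)}$. These agree with the paper's construction: $v^{(2)}/v^{(1)}=F^{(2)}D^{(1)}/(F^{(1)}D^{(2)})=r_{\max}$, and the SNR is invariant under a global rescaling of the weights.

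\textbf{What each buys.} Your argument handles arbitrary $\numpps$ in one step with explicit weights. It needs no second-order or asymptotic analysis. It also treats degenerate cases more cleanly: the paper's $r_{\min}$ and $r_{\max}$ are undefined when $F^{(2)}$ or $F^{(1)}$ vanishes, and the calculus argument silently excludes this. In exchange, the paper's computation shows the whole landscape of $\mathmerit'(r)$, including the minimum value $0$ at $r_{\min}$.

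\textbf{One small addition.} In the case $D^{(\sidx)}=0\neq F^{(\sidx)}$, you should also state achievability explicitly. It is trivial: take $v^{(\sidx)}=1$ and all other weights zero, which gives $D'=0$ and $F'\neq 0$.
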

The proof is given at the end of this section.
Phrased differently, for a fixed constant feedback policy, the SNR of the joint system is at most equal to the sum of the SNRs when treating each clockwork $C_\sidx$ separately, i.e., when treating $C_\sidx$ as a ticking clock satisfying \textST{} and \textCWI{} together with an integrated current $N^{(\sidx)}(t)=\sum_{\jidx\in\jumpalphabet{a}}\weights{\jidx}^{(\sidx)}N_\jidx^{(\sidx)}(t)$ and dynamics governed by the Lindblad operator $\Lindblad{C_\sidx}{}{}=-i[H_{C_{\sidx}}(\memorytorates{\sidx}(0)),\bullet]+\sum_{\jidx\in\jumpalphabet{\sidx}}\dissipator{J_{C_{\sidx},\jidx}(\memorytorates{\sidx}(0))}{}$.
Moreover, by appropriate post-processing of each $N^{(\sidx)}(t)$ (by rescaling its weights by a factor of $1/v^{(\sidx)}$), one can always obtain an integrated current $N'(t)$ that achieves equality.
From this, by taking a maximum over all possible choices of integrated currents, i.e., over all possible sets of weights $\mleft\{\weights{\jidx}^{(\sidx)}\mright\}_{\jidx,\sidx}$, we get the following result, as claimed in the main text: The highest achievable signal-to-noise ratio of the joint system is equal to the sum over $\sidx=1,\ldots,\numpps$ of the highest achievable signal-to-noise ratio in clockwork $C_\sidx$, when treated as a separate clock.\\

Up to this point we have considered the highest signal-to-noise ratio, that can be achieved by any integrated current, for a fixed feedback policy.
Let us now compare this highest achievable signal-to-noise ratio for different constant feedback policies. We observe that the highest achievable SNR of the joint system will be maximal if for each clockwork $C_\sidx$ the feedback policy chooses the control parameters $\memorytorates{\sidx}(0)$ such that the highest achievable SNR in $C_\sidx$, treated as an individual clock, is maximal. Such a constant feedback policy, together with an optimal integrated current and optimal post-processing, performs best in terms of the SNR and consequently is the point of reference to compare general feedback policies with. Let us now prove \cref{lem:limits_constant_feedback_policy}.\\

\begin{proof}
    In the case of $\numpps=1$, the result is trivial.
    In the case of $\numpps>2$, the result can be obtained by iteratively applying the result for $\numpps=2$, as we will describe later.
    Therefore, it suffices to consider $\numpps=2$:\\
    
    Consider an integrated current $N(t)=N^{(1)}(t)+ N^{(2)}(t)$ with $N^{(\sidx)}(t)=\sum_{\jidx\in\jumpalphabet{a}}\weights{\jidx}^{(\sidx)}N_\jidx^{(\sidx)}(t)$ for $\sidx=1,2$. 
    We define a new integrated current $N'(t)=N^{(1)}(t)+ rN^{(2)}(t)=N^{(1)}(t)+ N'^{(2)}(t)$, where we defined $N'^{(2)}(t)=rN^{(2)}(t)$ for $r\in\mathbb{R}$. 
    In the language of \cref{app:clock_quality}, $N'^{(2)}(t)$, is obtained from $N^{(2)}(t)$ by a rescaling of the weights by a factor of $1/r$.\\

    Applying \cref{prop:additivity_F_D_independent_subsys}, we have $F'=F^{(1)}+F'^{(2)}$ and $D'=D^{(1)}+D'^{(2)}$, where $F'$, $F^{(1)}$, $F'^{(2)}$ is the average current and $D'$, $D^{(1)}$, $D'^{(2)}$ is the noise of the integrated current $N'(t)$, $N^{(1)}(t)$ , $N^{(2)}(t)$, respectively.
    Furthermore, since $N'^{(2)}(t)$ is obtained by a rescaling of the weights by a factor of $1/r$, we can apply \cref{lem:rescaling_F_and_D}, to find $F'^{(2)}=r F^{(2)}$ and $D'^{(2)}=r^2 D^{(2)}$, where $F^{(2)}$ and $D^{(2)}$ are the average current and noise of the integrated current $N^{(2)}(t)$.\\

    Putting all of this together, for the SNR of the integrated current $N'(t)$, we have 
    \begin{equation}
        \label{eq:SNR_with_r}
        \mathmerit'=\frac{\mleft(F^{(1)} + r F^{(2)}\mright)^2}{D^{(1)}+r^2 D^{(2)}}.
    \end{equation}
    In order to determine the maximum value of this expression over $r$, we solve $0=\pdv{}{r} \mathmerit'$. We find two solutions 
    \begin{equation}
        r_{\min} = - \frac{F^{(1)}}{F^{(2)}} \quad\text{and}\quad r_{\max} =\frac{F^{(2)}D^{(1)}}{F^{(1)}D^{(2)}}.
    \end{equation}
    Plugging these solutions into \cref{eq:SNR_with_r}, yields
    \begin{equation}
        \label{eq:SNR_r_evaluated}
        \mathmerit'|_{r=r_{\min}} = 0 \quad \text{and}\quad         \mathmerit'|_{r=r_{\max}} = \frac{\mleft(F^{(1)}\mright)^2}{D^{(1)}} + \frac{\mleft(F^{(2)}\mright)^2}{D^{(2)}} = \mathmerit^{(1)} + \mathmerit^{(2)}.
    \end{equation}
    In order to determine whether the points are local maxima or minima, we evaluate the second derivative of $\mathmerit'$ with respect to $r$ at $r_{\min}$ and $r_{\max}$, which gives
    \begin{equation}
        \mleft.\pdv[order=2]{\mathmerit'}{r}\mright|_{r=r_{\min}}
        =\frac{2 \mleft(F^{(2)}\mright)^4}{\mleft(F^{(2)}\mright)^2 D^{(1)} + \mleft(F^{(1)}\mright)^2 D^{(2)}} \geq 0
    \end{equation}
    and
    \begin{equation}
        \mleft.\pdv[order=2]{\mathmerit'}{r}\mright|_{r=r_{\max}}
        =-\frac{2 \mleft(F^{(1)}\mright)^4\mleft(D^{(2)}\mright)^2}{\mleft(D^{(1)}\mright)^2\mleft(\mleft(F^{(2)}\mright)^2 D^{(1)} + \mleft(F^{(1)}\mright)^2 D^{(2)}\mright)} \leq 0,
    \end{equation}
    where we used the fact that $D^{(1)}$ and $ D^{(2)}$ are non-negative.
    From the observation that $\mathmerit'$ is continuous in $r$ and asymptotically approaches $\mathmerit^{(2)}$ for $r\rightarrow\pm\infty$, we find that $r=r_{\min}$ is a global minimum and $r=r_{\max}$ is a global maximum of the SNR $\mathmerit'$. 
    Therefore, we immediately obtain $\mathmerit\leq\mathmerit^{(1)} + \mathmerit^{(2)}$ using $\mathmerit'|_{r=1}=\mathmerit$ and \cref{eq:SNR_r_evaluated}.
    Additionally, we find that the upper bound is saturated by the integrated current $N^{(1)}(t)+r_{\max} N^{(2)}(t)$.\\
    
    In the case of $\numpps>2$, the inequality is obtained by iteratively shrinking and dividing the joint system into two subsystems, for which we apply the inequality found for $\numpps=2$: In the $\ridx$\textsuperscript{th} step of the iteration we partition the system into two subsystems, $C_{\ridx}$ and $C_{\ridx+1}\cdots C_{\numpps}$, and consider integrated currents, $N^{(\ridx)}(t)$ and $\sum_{\sidx=\ridx+1}^{\numpps}N^{(\sidx)}(t)$. 
    Combining the inequalities from each step, will give the desired inequality.
    Similarly, we can iteratively prove achievability:
    In the $\ridx$\textsuperscript{th} step of the iteration, we consider subsystems, $C_{1}\cdots C_{\ridx}$ and $C_{\ridx+1}$, and two integrated currents, $\tilde{N}(t)$ and $N^{(\ridx+1)}(t)$, where $\tilde{N}(t)$ is the integrated current found in the previous round. 
    By determining the constant $r_\mathrm{max}^{(\ridx+1)}$, we obtain an integrated current $\tilde{N}(t)+r_\mathrm{max}^{(\ridx+1)}N^{(\ridx+1)}(t)$ with SNR $\sum_{\sidx=1}^{\ridx+1}\mathmerit^{(\sidx)}$.
\end{proof}

\section{A feedback policy for two qubit clockworks that outperforms any constant feedback policy}
\label{app:go_example_qubit_clockwork}

In this section, we consider two qubit clockworks. For this system, we demonstrate that there exists a feedback policy which outperforms any constant feedback policy. 
Note, that for classical clockworks of dimension two such a feedback policy cannot exist in the case of a symmetric parameter space, as discussed in the main text and shown in \cref{app:proof_no_go}.\\

For the qubit clockwork we assumed the rate $\Gamma$ to be fixed and treated the energy difference $E^{(\sidx)}$, between ground state $\ket{0}_{C_\sidx}$ and exited state $\ket{1}_{C_\sidx}$, and the angle $\phi^{(\sidx)}$, appearing in the jump operator, as tunable parameters. 
The feedback policy that we construct in the following will keep the angle $\phi^{(\sidx)}$ fixed at $\phi^*$ for both $\sidx=1,2$ and only change the energy of each clockwork.
Then, the Hamiltonian of system $C_\sidx$ can be written as $H_{C_\sidx}(c)= - \frac{c E^*}{2}\mleft(\ketbra{0}{0}_{C_\sidx} - \ketbra{1}{1}_{C_\sidx}\mright)=-\frac{cE^*}{2}\sigma^z_{C_\sidx}$, where $\sigma^z$ denotes the Pauli $z$-matrix and where the control parameter $c$ determines the ratio between the energy difference set by the control unit and the energy difference $E^*$, which is optimal in the case of a single qubit clockwork. 
Furthermore, the jump operator of system $C_\sidx$ is given by $J_{C_a,\sidx} = \sqrt{\Gamma} \ketbra{\phi^*}{+}_{C_\sidx}$, where the subscript $\sidx$ indicates that we refer to a jump occurring in system $C_\sidx$ as a jump of type $\sidx$.
For the definition of $\phi^*$ and $E^*$ we refer to \cref{app:SNR_single_qubit_clockwork}.\\

Intuitively, the feedback policy switches the energy of each of the two qubit clockworks between two values, depending on whether clockwork one or clockwork two has produced a tick last.
In order to formalize this, consider a memory space $\memoryspace=\{1,2\}$.
If the last jump occurred in clockwork $C_1$, the state $m$ of the memory is given by $m=1$.
If the last jump occurred in clockwork $C_2$, we have $m=2$.
This gives a memory update function 
 \begin{equation}
    \label{eq:memory_update_go_example}
     \memoryupdate{m}{\sidx}{} = \sidx.
\end{equation}
After a jump occurs, the feedback policy may switch the energy of the two clockworks.
For this purpose, we introduce the parameters $\alpha_1,\alpha_2\in\mathbb{R}$. 
Given, that the last jump originated from clockwork $C_1$, i.e., for the memory in state $m=1$, the feedback policy sets the energy such that the Hamiltonian of clockwork $C_1$ becomes $H_{C_1}(\alpha_1)=-\frac{\alpha_1 E^*}{2}\sigma_{C_1}^z$ and such that the Hamiltonian of clockwork $C_2$ becomes $H_{C_2}(\alpha_2)=-\frac{\alpha_2 E^*}{2}\sigma_{C_2}^z$. 
If the last tick came from clockwork $C_2$, i.e., for the memory in state $m=2$, the feedback policy reverses the energy of the two clockworks. 
More precisely, the Hamiltonian of clockwork one is set to $H_{C_1}(\alpha_2)=-\frac{\alpha_2 E^*}{2}\sigma_{C_1}^z$ and the Hamiltonian of clockwork two is set to $H_{C_2}(\alpha_1)=-\frac{\alpha_1 E^*}{2}\sigma_{C_2}^z$.
This is achieved by parameter-update functions  
\begin{equation}
    \label{eq:paramter_update_function_go_example}
    \memorytorates{\sidx}(\sidxprime) = \begin{cases}
        \alpha_1 & , \text{ if } \sidx=\sidxprime\\
        \alpha_2 & , \text{ if } \sidx\neq\sidxprime
    \end{cases}
\end{equation}
for $\sidx,\sidxprime=1,2$.
Remark, for $\alpha_1=\alpha_2=1$ we would recover the case of no feedback, since the energy of both clockworks would be $E^*$ regardless of the memory state.
According to \cref{eq:Lindblad_evo_conditioned_on_memory}, the Lindblad operator of the joint system is given by 

\begin{equation}
    \begin{split}
        \label{eq:Lindblad_op_go_example}
        \Lindblad{\mathrm{fb}}{}{} = 
\sum_{m=1}^{2} \bigl(&-i[H_{C_1}(\memorytorates{1}(m))\otimes\mathbb{1}_{C_2}\otimes\ketbra{m}{m}_M,\bullet] + \dissipator{J_{C_1,1}\otimes\mathbb{1}_{C_{2}}\otimes\ketbra{1}{m}_M}{} - \\
        & -i[\mathbb{1}_{C_1}\otimes H_{C_2}(\memorytorates{2}(m))\otimes\ketbra{m}{m}_M,\bullet] + \dissipator{\mathbb{1}_{C_{1}}\otimes J_{C_2,2}\otimes\ketbra{2}{m}_M}{}
        \bigr),
    \end{split}
\end{equation}
where we used that fact that the memory is updated to $m=1$, whenever a jump in the first clockwork, occurs and to $m=2$, whenever a jump in the second clockwork occurs.\\

Let us now analyze the quality of a time estimate, that is obtained from the overall clockwork under such a feedback policy $\Pi$.
We consider the integrated current that simply counts the total number of jumps that have occurred until time $t$, i.e., $N(t)=N^{(1,1)}(t)+N^{(1,2)}(t)+N^{(2,1)}(t)+N^{(2,2)}(t)$, where $N^{(\sidx,m)}(t)$ is the elementary integrated current counting the number of jumps from clockwork $C_\sidx$, given that the memory state before the jump was $m$.
In \cref{fig:qubit_clockwork_feedback_landscape}, we plot $\mathmerit/\Gamma$ for this integrated current versus the ratios $\alpha_1$ and $\alpha_2$.
\begin{figure}
    \centering
    \includegraphics[width=0.65\linewidth]{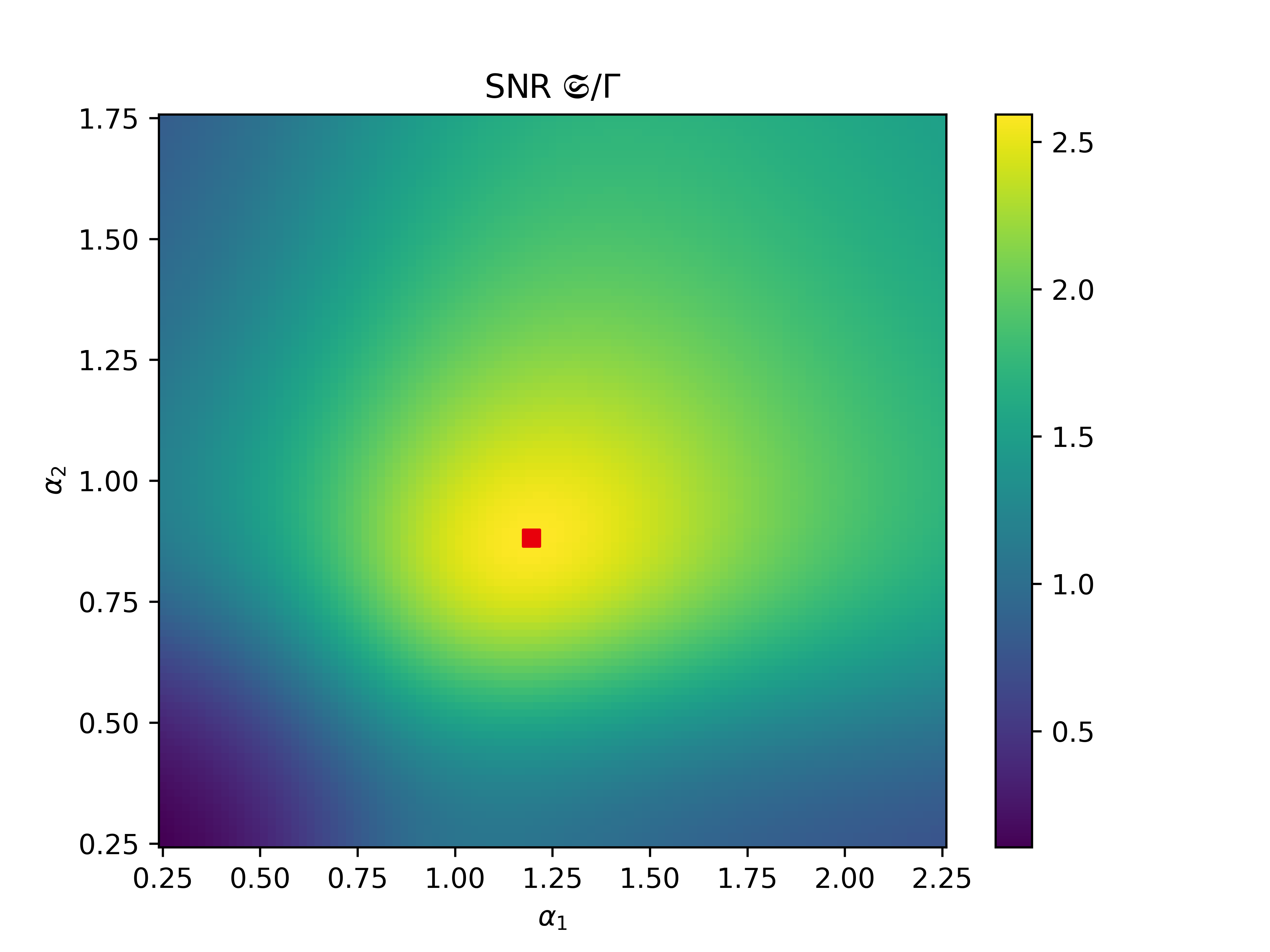}
    \caption{Data from a numerical simulation in Python \cite{Python,Github}, showing the signal-to-noise ratio of an integrated current counting every jump from two qubit clockworks, under energy-based feedback, with equal weights.
    The parameters $\alpha_1$ and $\alpha_2$ determine at which energy the first clockwork is operated at, after a tick from qubit clockwork one or two occurred, and vice versa for the energy of qubit clockwork two.
    The red square marks the parameters maximizing the SNR.}
    \label{fig:qubit_clockwork_feedback_landscape}
\end{figure}
Unsurprisingly, at the point $\alpha_1=\alpha_2=1$ we achieve an SNR $\mathmerit\approx 2.38\Gamma$.
This point corresponds to no feedback and both clockworks being operated at $E^*$ (and $\phi^*$), which would be optimal parameters when considering only a single qubit clockwork.
However, for two qubit clockworks with feedback this point is suboptimal in terms of the SNR.
The maximum of the SNR $\mathmerit\approx 2.59\Gamma$ is reached at the point $\alpha_1\approx1.20$ and $\alpha_2\approx0.88$, indicated by the red square.
Therefore, the feedback policy that we have constructed in this section yields a SNR that is roughly 9\% higher compared to the highest SNR, achievable by any integrated current for any constant feedback policy. 
Consequently, it constitutes an example of a general feedback policy that outperforms any constant feedback policy.\\

Let us briefly outline why one might find this result surprising:
Under the feedback policy with optimal parameters $\alpha_1$ and $\alpha_2$, the energy of clockwork $C_1$ is increased to $1.2E^*$, given that a jump in clockwork $C_1$ has occurred last, speeding up the continuous evolution of clockwork $C_1$. The evolution of clockwork $C_2$ on the other hand is slowed down by setting the energy of the second clockwork to $0.88E^*$. 
If the last jump occurred in clockwork $C_2$, the settings for the two clockworks are reversed.
In \cref{fig:SNR_qubit_clockwork_Python} we plotted the signal-to-noise ratio of a single qubit clockwork and an integrated current counting the total number of ticks observed from this clockwork.
In this figure we have marked the point corresponding to $1.2E^*$ (and $\phi^*$) with a red circle and the point corresponding to $0.88E^*$ (and $\phi^*$) with a red triangle.
From \cref{lem:limits_constant_feedback_policy}, it is clear that a constant feedback policy, which operates the first clockwork at the point marked by the red circle and the second clockwork at the point marked by the red triangle, the best achievable signal-to-noise ratio for any integrated current is upper bounded by $2.38\Gamma$.
However, by switching the dynamics of the clockworks between the points marked by the red circle and triangle, conditioned on which clock ticked last, the feedback policy that we have constructed in this section achieves a signal-to-noise ratio of $2.59\Gamma>2.38\Gamma$. 
Phrased differently, we achieve an increase in performance by (conditionally) switching between two suboptimal points of operation.\\

Our result sheds new light onto a question raised in \cite{RNH_ticking_clocks}: Can it be beneficial in terms of a ticking clock's performance to switch between different dynamics of the clockwork?
Our results suggest, that in the case of multiple quantum clockworks this is indeed the case.
Note, however, that the notion of switching between different dynamics in \cite{RNH_ticking_clocks} is slightly different from what we have presented. 
While by Silva et al.\ \cite{RNH_ticking_clocks} propose to deterministically switch between different dynamics depending on the current state of the tick register, i.e., the overall clock's estimate of time, our incoherent feedback protocol switches between different dynamics depending on the observed sequence of ticks, which is fundamentally probabilistic.

\end{appendices}
\end{document}